\newtheorem{thm}{Theorem}\crefname{thm}{Theorem}{Theorems}
\newtheorem{lem}[thm]{Lemma}\crefname{lem}{Lemma}{Lemmas}
\newtheorem{prp}[thm]{Proposition}\crefname{prp}{Proposition}{Propositions}
\crefname{cor}{Corollary}{Corollaries}
\crefname{prb}{Problem}{Problems}
\crefname{dfn}{Definition}{Definitions}
\crefname{section}{Section}{Sections}
\crefname{appendix}{Appendix}{Appendices}
\DeclareMathOperator{\tr}{tr}
\DeclareMathOperator{\rank}{rk}
\DeclareMathOperator{\id}{id}
\newcommand{\CC}{\mathbb C}
\newcommand{\EE}{\mathbb E}
\newcommand{\C}{\mathbb{C}}
\newcommand{\norm}[1]{\lVert{#1}\rVert}
\newcommand{\abs}[1]{\lvert{#1}\rvert}
\newcommand{\ot}{\otimes}
\newcommand{\op}{\oplus}
\title{Conditional Mutual Information of Bipartite Unitaries and Scrambling}
\author{Dawei Ding,}
\author{Patrick Hayden,}
\author{Michael Walter}
\affiliation{Stanford Institute for Theoretical Physics, Stanford University, Stanford, California 94305, USA}
\emailAdd{dding@stanford.edu}
\emailAdd{phayden@stanford.edu}
\emailAdd{michael.walter@stanford.edu}
\abstract{%
One way to diagnose chaos in bipartite unitary channels is via the tripartite information of the corresponding Choi state, which for certain choices of the subsystems reduces to the negative conditional mutual information (CMI).
We study this quantity from a quantum information-theoretic perspective to clarify its role in diagnosing scrambling.
When the CMI is zero, we find that the channel has a special normal form consisting of local channels between individual inputs and outputs.
However, we find that arbitrarily low CMI does not imply arbitrary proximity to a channel of this form, although it does imply a type of approximate recoverability of one of the inputs.
When the CMI is maximal, we find that the residual channel from an individual input to an individual output is completely depolarizing when the other input is maximally mixed.
However, we again find that this result is not robust.
We also extend some of these results to the multipartite case and to the case of Haar-random pure input states.
Finally, we look at the relationship between tripartite information and its R\'enyi-2 version which is directly related to out-of-time-order correlation functions.
In particular, we demonstrate an arbitrarily large gap between the two quantities.}
\begin{document}
\maketitle

\section{Introduction}

Recent research in quantum gravity has led to an interest in the scrambling and chaotic properties of many-body quantum systems~\cite{hayden2007black,sekino2008fast,shenker2014black,shenker2014multiple,kitaev2014hidden,roberts2015diagnosing,maldacena2015bound}.
The simplest model to consider is that of a unitary time evolution, $U_{AB\to CD}$, where $A$,$B$ and $C$,$D$ denote fixed bipartitions of past and future time slices of the quantum system, respectively.
Typically, $A=C$ and $B=D$, and we merely use different letters to denote the past and future timeslices, but we may also consider two different bipartitions if we want to compare the propagation between different subsystems.

For chaotic dynamics, we expect that the local degrees of freedom $A$,$B$ will get encoded nonlocally into $C,D$, i.e., \emph{scrambled}.
One way to formalize this intuition, proposed recently in~\cite{hosur2015chaos}, is to consider the \emph{Choi state} dual to $U$, which is commonly used in quantum information theory to study the properties of quantum channels~\cite{wilde2013quantum}. For the specific case of bipartite unitaries, the Choi states are used to study the capacity~\cite{bennett2003capacities,harrow2004coherent,harrow2010time,berry2007lower,harrow2011communication} and the cost of implementation~\cite{wakakuwa2015coding, wakakuwa2015cost}.
In the present context, this is the pure state defined by
\begin{equation}
\label{eq:choi}
	\rho_{ABCD} = U_{A'B'\to CD} (\Phi^+_{AA'} \ot \Phi^+_{BB'}) U_{A'B'\to CD}^\dagger,
\end{equation}
where $\Phi^+_{AA'}$ and $\Phi^+_{BB'}$ denote maximally entangled states (\cref{fig:bipartite}), and it allows us to study the past and future subsystems on equal footing.
For scrambling unitaries, we expect the local correlations, as measured by the mutual informations $I(A;C) := S(A)+S(C)-S(AC)$ and $I(A;D)$, to be suppressed, while $I(A;CD)$ is necessarily maximal by unitarity.
This suggests the \emph{tripartite information}
\[ I_3(A;C;D) := I(A;C) + I(A;D) - I(A;CD), \]
or more precisely $-I_3$, as a measure of scrambling in unitary quantum channels.
It is easy to verify that the tripartite information does not depend on the choice of three subsystems $A$,$B$,$C$ of the four-party pure quantum state $\rho_{ABCD}$.

\begin{figure}
  \centering
  \includegraphics[width=0.3\textwidth]{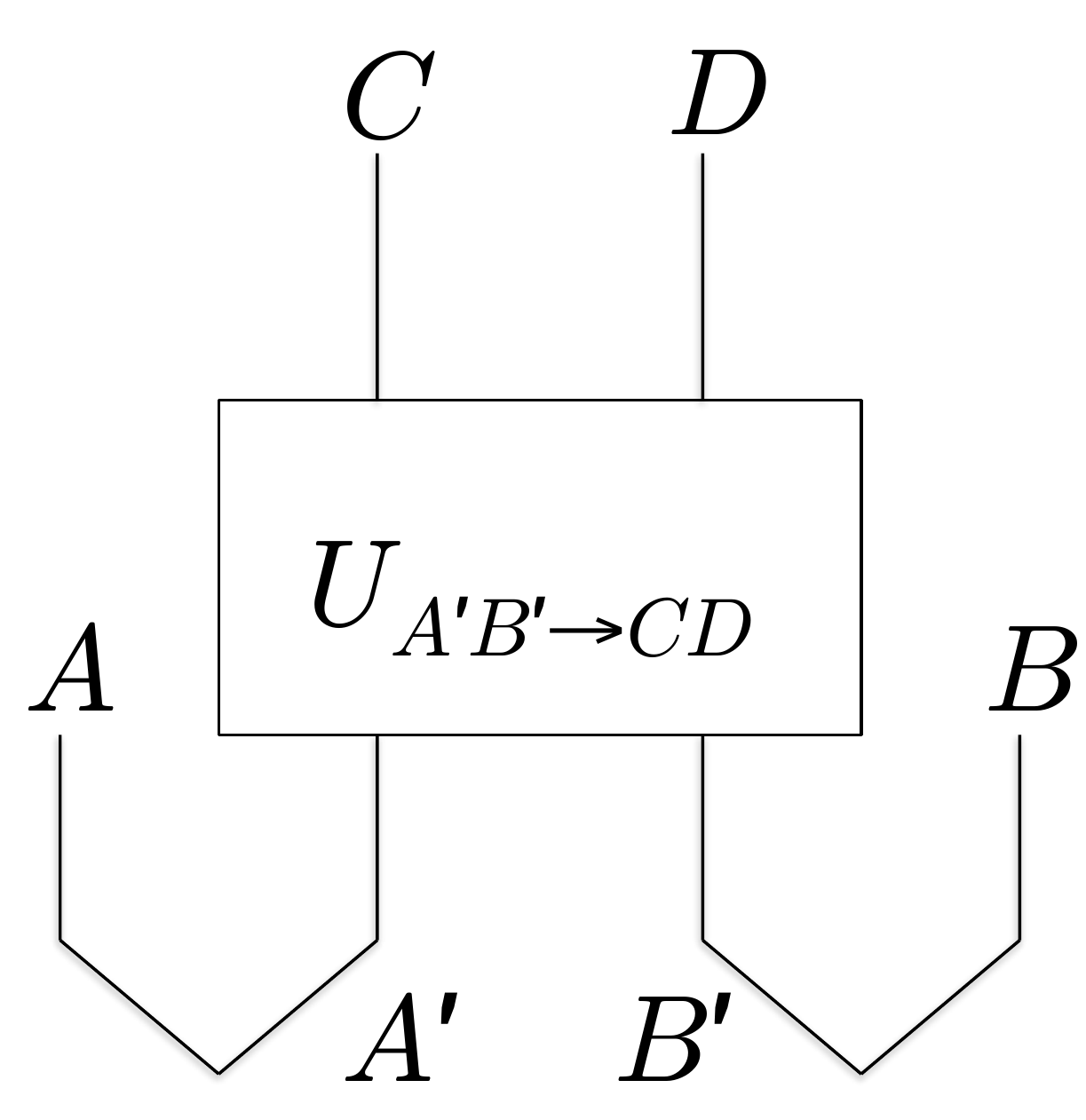}
  \caption{Choi state of a bipartite unitary $U$.}
  \label{fig:bipartite}
\end{figure}

The starting point to our investigations is the observation that unitarity implies that the reduced density matrices $\rho_{AB}$ and $\rho_{CD}$ of the Choi state are maximally mixed.
It follows that $I(A;B)=I(C;D)=0$ and hence the negative tripartite information reduces to
\begin{equation}
\label{eq:neg tri is cmi}
  -I_3 = I(A;B|C),
\end{equation}
where $I(A;B|C) = I(A;BC)-I(A;C)$ is the \emph{conditional mutual information} (CMI).%
\footnote{Likewise, $-I_3 = I(A;B|D) = I(C;D|A) = I(C;D|B)$. Note that other choices of subsystems might not reduce $-I_3$ to the CMI.}
In particular, the tripartite information is never positive as a consequence of the strong subadditivity of the von Neumann entropy:
\[ I_3 \leq 0. \]
This is true for an arbitrary unitary time evolution, whether chaotic or not, contrary to previous expectations~\cite{hosur2015chaos}.
Interestingly, $I_3\leq0$ is \emph{not} true for general quantum states, but it has recently been proved in a different context, namely as the consequence of the Ryu-Takayanagi formula in holographic systems~\cite{hayden2013holographic} (cf.~\cite{balasubramanian2014multiboundary,bao2015holographic}) and its tensor network models~\cite{pastawski2015holographic,hayden2016holographic}, where it can be interpreted as a consequence of the monogamy of entanglement~\cite{nezami2016multipartite}.
Whether there exists a deeper common reason for the negativity of $I_3$ associated to unitary transformations and the negativity of $I_3$ of a holographic state remains a tantalizing open question.

\begin{figure}
\centering
\includegraphics[width=.3\linewidth]{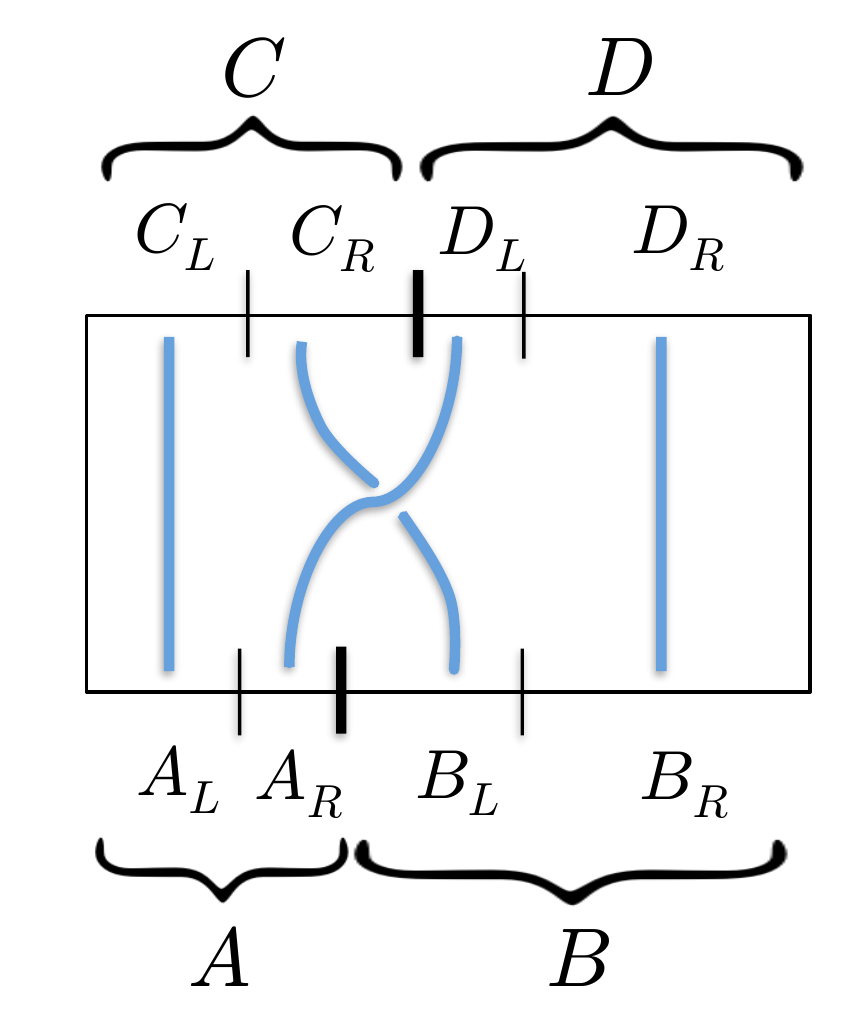}
\caption{Any bipartite unitary with $I_3=0$ is a `criss-cross channel' of the form~\eqref{eq:normal form}, routing the quantum information from the input to the output subsystems.}
\label{fig:normal form}
\end{figure}

\medskip

In this paper, we aim to clarify the meaning of the tripartite information from the perspective of quantum information theory, based on the connection established above.
We are particularly interested in the extreme cases, where the tripartite information attains its minimal or maximal values. 
We say that $U$ is \emph{minimally $I_3$-scrambling} if $I_3=0$ and \emph{maximally $I_3$-scrambling} if it attains its maximally negative value.

We start in \cref{sec:minimal} by considering the case of minimal $I_3$-scrambling.
Our first result shows that any such unitary has the following special form:
\begin{equation}
  U_{AB\to CD} = U_{A_L \to C_L} \ot U_{A_R \to D_L} \ot U_{B_L \to C_R} \ot U_{B_R\to D_R},
  \label{eq:normal form}
\end{equation}
for some decomposition $A=A_L \ot A_R$ and likewise for $B,C,D$ (see \cref{fig:normal form} for an illustration).
That is, the unitary can be decomposed into, in general, four smaller unitaries which locally route the quantum information between the input and output subsystems.
Such a `criss-cross channel' exactly matches our intuition of what a non-scrambling process should look like.
This result can also be interpreted as maximizing simultaneously achievable rates of communication between the input and output subsystems:
For example, we have that
\begin{equation}
\label{eq:rate local global}
  R_{A\to C} + R_{A\to D} = Q_{A\to CD},
\end{equation}
where we write $R_{A\to C}$ and $R_{A\to D}$ for the simultaneously achievable (one-shot, zero-error) quantum communication rates from $A$ to $C$ and $D$, respectively, and $Q_{A\to CD}$ for the quantum capacity from $A$ to $CD$, which by unitarity is always equal to $\log\abs A$, the Hilbert space dimension of $A$. Note that logarithms in this paper are base 2, in accordance to the convention in quantum information.
Lastly, our result can also be translated into a statement about the recoverability of the systems from partial information --- for the purposes of recovering the quantum information from input $A$ given output $D$, access to the other input subsystem $B$ does not help.

It is interesting to ask to what extent the above statements can be generalized to the case where $I_3\approx0$.
The latter result can be readily generalized to the approximate case using a recent result in quantum information theory~\cite{sutter2015universal}, which asserts that we can find a quantum operation $\mathcal R_{D\to BD}$ independent of the state at $A$ such that we can approximately recover $\rho_{ABD}$ from $\rho_{AD}$.
On the other hand, we show that~\eqref{eq:normal form} is not robust in the following, strongest possible sense: we explicitly construct a family of unitary quantum channels such that $I_3$ is arbitrarily close to zero, while their distance from any unitary of the form~\eqref{eq:normal form} is lower-bounded by a positive constant.
Our construction implies that any robust version of~\eqref{eq:normal form} must necessarily depend on the Hilbert space dimension.

From the perspective of quantum information theory, our results complement the nonrobustness result in \cite{ibinson2008robustness,christandl2012entanglement} that provide examples of tripartite states with vanishing conditional mutual information but non-vanishing trace distance to any quantum Markov chain state, that is, a state with a special normal form equivalent to having zero CMI.
Here, on the other hand, we find a tripartite state with vanishing CMI \emph{and} trace distance, but still with non-vanishing diamond norm to any quantum Markov chain state when the states are viewed as reduced Choi states of bipartite unitaries.
This provides further evidence for the nonrobustness of normal forms for quantum Markov chains.

\medskip

In \cref{sec:maximal} we then consider the other extreme case, where the tripartite information $I_3$ is maximally negative.
This can be achieved by, e.g., perfect tensors~\cite{pastawski2015holographic}, also known as absolutely maximally entangled states~\cite{goyeneche2015absolutely, helwig2012absolute}, such as those obtained by the random construction of~\cite{hayden2016holographic}.
Here, we give an explicit construction similar to that of~\cite{keet2010quantum} in the case $A=B=C=D$, which works in arbitrary odd dimensions. We also show that maximally scrambling unitaries do not exist if all the systems are qubits.

Now suppose that $U$ is maximally $I_3$-scrambling and, for concreteness, that the dimension of $A$ is the smallest among the four subsystems, so that $I_3 = -2\log\;\lvert A\rvert$.
Then the residual channels $\mathcal N_{A\to C}$ and $\mathcal N_{A\to D}$, obtained by fixing a maximally mixed state $\tau_B$ into $B$, applying the unitary, and tracing out either $D$ or $C$, are completely depolarizing.
\footnote{This is true only when the input on $B$ is fixed to be maximally mixed. In general, there may be some correlations between $A$ and $C$ or $D$.}
In other words, we cannot locally route any information from $A$ to $C$ or $D$,%
while we still have $R_{A\to CD}=\log\;\lvert A\rvert$ by unitarity.
This characterization nicely complements~\eqref{eq:normal form} and \eqref{eq:rate local global}.
It also complements the recovery interpretation: with only $D$, we can recover none of the information from $A$, but with $BD$ we can recover all of it.
However, we again find that we need to be cautious when generalizing this result to the approximate case:
We construct a unitary such that $I_3$ is arbitrarily close to being maximally negative, but whose residual channel $\mathcal N_{A\to C}$ is bounded away from the completely depolarizing channel.

\medskip

In \cref{sec:general}, we consider general values of $I_3$, again using the connection~\eqref{eq:neg tri is cmi} to the conditional mutual information.
The latter has an operational interpretation in the task of quantum state redistribution.
More precisely, given a quantum state $\rho_{ACD}$ with purification $\rho_{ABCD}$, if one party possesses $AC$ and another party $D$, the former can send $A$ to the latter using at an optimal rate of $\frac12 I(A;B \vert D)=-\frac12I_3$ qubits~\cite{devetak2008exact}.
This is intuitive: given that a strongly scrambling unitary will delocalize information from the inputs, we indeed expect that a larger number of qubits should be required to transfer systems. We show that this is consistent with our main results for minimal and maximal $I_3$-scrambling and give simple protocols that achieve the given qubit rate. Note that it is also possible to do similar analyses using other operational interpretations of CMI such as in the tasks of state deconstruction and conditional erasure~\cite{berta2016deconstruct}.

\medskip

An appealing feature of the tripartite information is that it is related to \emph{out-of-time-order (OTO) correlators}, an alternative diagnostic of chaos proposed to quantify the analog of the `butterfly effect' in black holes~\cite{shenker2014black}.
OTO correlators can also be measured in various physical systems~\cite{swingle2016measuring,yao2016interferometric}.
An OTO correlator of two local operators $\mathcal O_A$ and $\mathcal O_C$ is by definition an expectation value of the form
\[ \braket{\mathcal O_C(t) \mathcal O_A \mathcal O_C(t) \mathcal O_A}_\beta = \frac 1 Z \tr \bigl[ e^{-\beta H} \mathcal O_C(t) \mathcal O_A \mathcal O_C(t) \mathcal O_A \bigr], \]
where $U=e^{-iHt}$ is the time evolution operator and $\mathcal O_C(t) = U^\dagger \mathcal O_C U$.
We define the \emph{average OTO correlator} between $A$ and $C$, denoted $\lvert\braket{\mathcal O_C(t) \mathcal O_A \mathcal O_C(t) \mathcal O_A}_\beta\rvert$, by averaging the above over orthonormal bases of operators on $A$ and $C$.
In the infinite temperature limit, $\beta=0$, it is known that~\cite{hosur2015chaos}
\[
\lvert\braket{\mathcal O_C(t) \mathcal O_A \mathcal O_C(t) \mathcal O_A}_{\beta=0}\rvert
\times
\lvert\braket{\mathcal O_D(t) \mathcal O_A \mathcal O_D(t) \mathcal O_A}_{\beta=0}\rvert
\propto 2^{I_3^{(2)}}.
\]
Here, $I_3^{(2)} = S_2(A) + S_2(B) - S_2(AC) - S_2(AD)$ is a variant of the tripartite information%
\footnote{Note that we can similarly write $I_3(A;C;D) = -I(C;D|A) = S(A) + S(B) - S(AC) - S(AD)$.}
defined in terms of the R\'enyi-2 entropy, $S_2(A) = -\log\tr\rho_A^2$, and the entropies are evaluated on the Choi state of $U$.
Since $I_3^{(2)}\geq I_3$, the butterfly effect as measured by small OTO correlators implies $I_3$-scrambling.
In \cref{sec:oto}, we show that the converse is not true: a unitary with almost maximally negative tripartite information can still have large OTO correlators.
In fact, we find that the difference $I_3^{(2)} - I_3$ can be arbitrarily large.

\begin{figure}
  \centering
  \includegraphics[width=0.25\textwidth]{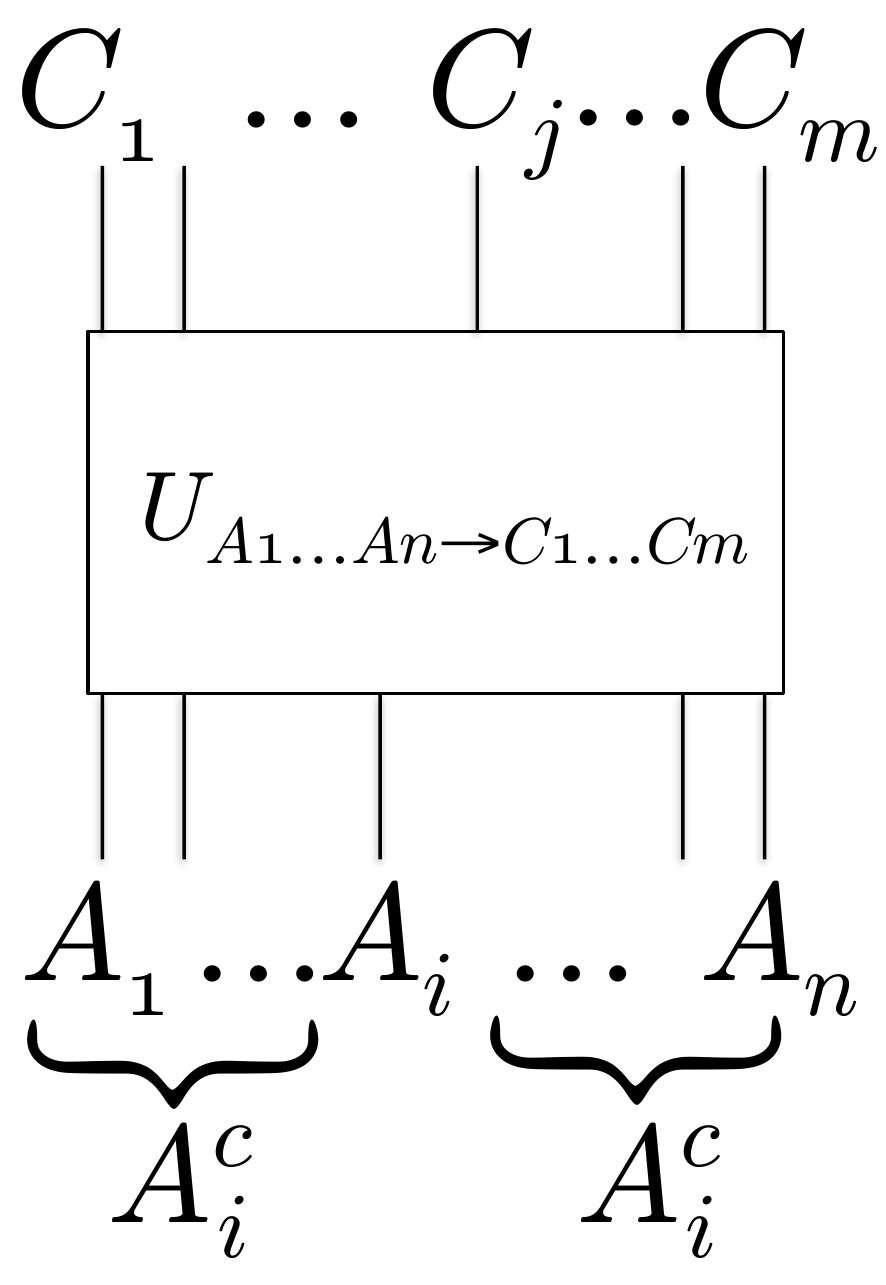}
  \caption{A multiple input and multiple output (MIMO) unitary.} 
  \label{fig:MIMO}
\end{figure}

\medskip

Finally, many of the above results can be extended to the multipartite case, as we explain in \cref{sec:multipartite}.
Let $U_{A_1\dots A_n \to C_1 \dots C_m}$ be a \emph{multiple input and multiple output (MIMO) unitary} as shown in \cref{fig:MIMO}.
We show that the natural generalization of minimal $I_3$-scrambling is to demand that $I_3(A_i;A_i^c;C_j)=0$ for all $i$ and $j$, where we write $A_i^c$ for the subset of all input subsystems save for $A_i$.
In this case, the unitary takes the following form, generalizing our result for the bipartite case:
\begin{equation}
  U_{A_1\dots{}A_n\to C_1\dots{}C_m} = \bigotimes_{i,j} U_{i\to j},
\end{equation}
where $U_{i \to j}$ is a local unitary mapping input subsystem $A_i$ to output subsystem $C_j$.
We also give an explicit construction of a family of maximally scrambling MIMO unitaries when all systems are of the same large prime dimension.

\medskip

The nonrobustness of various algebraic characterizations of chaos and scrambling, while undesirable, is one of the central messages of this article.
It typically leads to dimensional dependencies, 
which, in the context of high energy physics where Hilbert spaces are typically high-dimensional, are of particular significance.
We believe that this provides good motivation for the development of alternative, more robust characterizations and diagnostics, not only in the present context but also in the study of other quantum information concepts in high energy physics, such as quantum error correction in holographic systems.

\section{Minimal scrambling}
\label{sec:minimal}
In this section, we study properties of bipartite unitaries $U_{AB \to CD}$ where $I_3 \approx 0$.
We first consider the exact case.
%
Here, our main result is that the unitary has the following normal form: 

\begin{thm}
\label{thm:unitaries}
  A unitary $U_{AB\to CD}$ is minimally $I_3$-scrambling, i.e., $I_3=0$, if and only if it can be decomposed into a tensor product of local unitaries.
  That is,
    \[ U_{AB \to CD} = U_{A_L \to C_L} \ot U_{A_R \to D_L} \ot U_{B_L \to C_R} \ot U_{B_R \to D_R}, \]
  with respect to decompositions $A = A_L \ot A_R$, $B = B_L \ot B_R$, $C = C_L \ot C_R$, $D = D_L \ot D_R$.
  The dimensions of the subsystems are given by $\abs{A_L}=\abs{C_L}=\frac12I(A;C)_U$ etc.
\end{thm}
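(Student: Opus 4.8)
The plan is to prove both implications, with the forward (`only if') direction carrying essentially all the work. For the easy direction, I would assume $U$ has the stated tensor-product form and simply compute. The Choi state~\eqref{eq:choi} then factorizes into four maximally entangled pairs, one across each of $A_L C_L$, $A_R D_L$, $B_L C_R$, $B_R D_R$. Tracing out the appropriate systems gives $\rho_{AC} = \Phi^+_{A_L C_L}\otimes\tau_{A_R}\otimes\tau_{C_R}$, whence $I(A;C) = 2\log\abs{A_L}$, and likewise $I(A;D) = 2\log\abs{A_R}$ and $I(A;CD) = 2\log\abs A$; since $\abs A = \abs{A_L}\abs{A_R}$ these combine to $I_3 = 0$. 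This computation simultaneously reads off the claimed dimensions, $\log\abs{A_L} = \tfrac12 I(A;C)_U$ and so on.

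For the converse, I would first use~\eqref{eq:neg tri is cmi} to rewrite $I_3 = 0$ as the Markov condition $I(A;B\vert C) = 0$ and invoke the structure theorem for quantum Markov chains (Hayden--Jozsa--Petz--Winter): there is a decomposition $\mathcal H_C = \bigoplus_j \mathcal H_{C_j^L}\otimes\mathcal H_{C_j^R}$ with $\rho_{ABC} = \bigoplus_j p_j\, \rho_{A C_j^L}\otimes\rho_{C_j^R B}$. The crucial point is to collapse this direct sum to a \emph{single} block, so that one obtains a genuine tensor factorization rather than a block decomposition. Here I would exploit that the reduced Choi state $\rho_{CD}$ is maximally mixed, hence full rank. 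Purifying the block-diagonal state so that the label $j$ is recorded coherently in both $C$ and $D$, the support of $\rho_{CD}$ is confined to $\bigoplus_j \mathcal H_{C_j}\otimes\mathcal H_{D_j}$, a subspace of dimension $\sum_j\abs{C_j}\abs{D_j}$, which is strictly smaller than $\abs C\abs D$ as soon as more than one block occurs. Full rank of $\rho_{CD}$ therefore forces a single block, i.e.\ a genuine splitting $C = C_L\otimes C_R$ with $\rho_{ABC} = \rho_{A C_L}\otimes\rho_{C_R B}$.

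I would then run the identical argument on the other three conditional mutual informations, which all equal $-I_3 = 0$ by the footnote to~\eqref{eq:neg tri is cmi}: $I(A;B\vert D)=0$ yields a genuine splitting $D = D_L\otimes D_R$ (again using full rank of $\rho_{CD}$), while $I(C;D\vert A)=I(C;D\vert B)=0$ yield $A = A_L\otimes A_R$ and $B = B_L\otimes B_R$ (now using full rank of $\rho_{AB}$). Combining the product forms of the three-party marginals, I would argue that the pure Choi state factorizes across the bipartition $(A C_L D_L)\,:\,(B C_R D_R)$, so that $\ket{\psi}_{ABCD} = \ket{\phi}_{A C_L D_L}\otimes\ket{\chi}_{B C_R D_R}$. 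Applying the Markov structure to each pure factor --- where a single block is automatic, since a pure state cannot be a nontrivial direct sum of products --- splits $\ket{\phi}$ into bipartite pure states on $A_L C_L$ and $A_R D_L$, and $\ket{\chi}$ into pure states on $B_L C_R$ and $B_R D_R$. Since all single-system marginals are maximally mixed, each of these four pure states is maximally entangled, hence the Choi state of a local unitary; reassembling them and invoking the Choi--Jamio\l{}kowski isomorphism produces the normal form, with the Schmidt ranks giving the dimension formula.

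The step I expect to be the main obstacle is exactly the passage from marginal information to genuine global structure: the Markov structure theorem only constrains reduced states, and a priori the purifying systems could be glued together by a unitary that scrambles the $L/R$ split, so neither the tensor factorizations of $C$ and $D$ nor the global product across $(A C_L D_L):(B C_R D_R)$ is automatic. The device that rigidifies everything is the maximal mixedness --- and hence full rank --- of $\rho_{AB}$ and $\rho_{CD}$: it is what kills the extra blocks via the dimension count above, and what pins down \emph{which} tensor factor of each system carries the correlations, thereby matching the `up to isometry' purification splittings to the genuine tensor decompositions. Carefully tracking these rank and uniqueness constraints across all four marginals is where the real care will be required.
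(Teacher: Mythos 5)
Your proposal is correct and follows essentially the same route as the paper: apply the Hayden--Jozsa--Petz--Winter structure theorem to $I(A;B|C)=0$, use the full rank of $\rho_{CD}$ (forced by $\rho_{AB}=\tau_{AB}$ and $\abs{AB}=\abs{CD}$) to collapse the direct sum to a single block via exactly the support-dimension count you describe, and then extract the four maximally entangled pairs from dimension counting and Uhlmann's theorem. The only deviation is that you propose to run the Markov structure theorem separately on all four conditional mutual informations and glue the results, whereas the paper applies it once to $I(A;B|C)=0$ and lets the purification of the resulting single-block product state deliver the splittings of $A$, $B$ and $D$ for free --- a harmless redundancy rather than a different method.
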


See \cref{fig:normal form} for an illustration.
This result is consistent with the notion of scrambling as delocalization of quantum information.
To see this, take a minimally $I_3$-scrambling unitary $U_{AB\to CD}$, and consider the residual channel
$\mathcal N_{A \to C}[\sigma_{A}] = \tr_D \Big[ U_{AB \to CD} (\sigma_{A} \ot \sigma^0_B) U_{AB \to CD}^\dagger \Big]$ 
for some choice of state $\sigma^0_B$ on $B$.
Then, \cref{thm:unitaries} implies that
\[ \mathcal N_{A \to C}[\sigma_{A}] = U_{A_L \to C_L} \sigma_{A_L} U_{A_L \to C_L}^\dagger \ot \sigma^0_{C_R}, \]
where $\sigma^0_{C_R} = U_{B_L \to C_R} \sigma^0_{B_L} U_{B_L \to C_R}^\dagger$ is independent of the channel input.
Hence, for the purposes of quantum information transfer, the residual channel $\mathcal N_{A \to C}$ is equivalent to the unitary quantum channel $U_{A_L\to C_L}$.
Likewise, $\mathcal N_{A \to D}$ is equivalent to the unitary channel $U_{A_R\to D_L}$, while $\mathcal N_{A \to CD}$ is equivalent to their tensor product.
In particular, the quantum information from $A$ can be perfectly transmitted using local decoders at $C$ and $D$, independent of the choice of input at $B$.
Thus quantum information is perfectly routed through the system in a completely localized fashion, in agreement with the absence of scrambling.

From the perspective of quantum communication, we may state this as
\[ Q_{A\to C} + Q_{A\to D} = Q_{A \to CD} = \log\,\abs{A}, \]
where $Q$ is the quantum capacity of the corresponding channels, i.e., the maximum qubit rate at which quantum communication can be transferred through the channels in the limit of many channel uses and vanishing error (see, e.g., \cite{wilde2013quantum} for details). The right-hand side equality is a consequence of unitarity.
In fact, we actually get the even stronger result that
\[ R_{A\to C} + R_{A\to D} = Q_{A \to CD} = \log\,\abs{A} \]
where $R_{A\to C}$ and $R_{A\to D}$ are simultaneously achievable, one-shot, zero-error quantum communication rates.

It is important to note that simultaneously achievable rates are different from the individual quantum capacities for general broadcast channels $A \to CD$.
The former always satisfy an inequality $R_{A \to C} + R_{A \to D} \le Q_{A \to CD}$. However, the latter need not.
This phenomenon is also found in classical communication capacities.
Consider, e.g., the basis-dependent copying channel $A\to CD$ which sends a noiseless copy of $A$ to $C$ and $D$ as $\ket{j}_A \mapsto \ket{j}_C \ket{j}_D$.
The individual capacities are $\log d$ but so is the overall capacity.
While we cannot make the same construction for quantum capacities due to the no-cloning theorem,
we can take advantage of the fact that the product of the dimensions of two subspaces can be greater than the sum to get a gap in quantum capacities as well.
Define the unitary
\begin{equation*}
  U \ket{a} \ket{b} =
  \begin{cases}
    \ket{a} \ket{b} & a,b \le d_0 \text{ or } a,b > d_0 \\
    \ket{b} \ket{a} & \text{otherwise}
  \end{cases}
\end{equation*}
where $d_0 \leq d$.
If we fix the input state $\rho_B^0 = \ket0\!\!\bra0$ then the resulting channel sends $\ket a \mapsto \ket a \ot \ket 0$ if $a \leq d_0$, and $\ket a \mapsto \ket 0 \ot \ket a$ otherwise.
Therefore, $Q_{A\to C} \geq \log d_0$ by coding in the former, $d_0$-dimensional subspace, while $Q_{A\to D}\geq\log(d-d_0)$ by coding in the latter subspace.
Hence the sum of the individual capacities is at least $Q_{A\to C}+Q_{A\to D} = \log d_0(d-d_0) > \log d$ for appropriate $d_0$.
However, $Q_{A\to CD}$ is never larger than $\log\,\abs{A} = \log d$, so we obtain the inequality $Q_{A\to C}+Q_{A\to D} > Q_{A\to CD}$.

\medskip

To prove \cref{thm:unitaries}, we first prove the corresponding statement for quantum states with vanishing conditional mutual information:

\begin{prp}
\label{prp:zero cmi}
  Any pure four-party quantum state $\rho_{ABCD}$ that satisfies the three properties
  \begin{enumerate}
    \item $I(A;B|C) = 0$, 
    \item $\rho_{AB} = \tau_{AB}$, the maximally mixed state on $AB$, and
    \item $\lvert AB \rvert = \lvert CD \rvert$.
  \end{enumerate}
  has the form
  \begin{equation*}
      \rho_{ABCD} = \Phi^+_{A_L C_L} \ot \Phi^+_{A_R D_L} \ot \Phi^+_{B_L C_R} \ot \Phi^+_{B_R D_R}
  \end{equation*}
  where $A = A_L \ot A_R$, $B = B_L \ot B_R$, $C = C_L \ot C_R$, $D = D_L \ot D_R$, and where the $\Phi^+$ denote maximally entangled states.
\end{prp}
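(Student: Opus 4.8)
The plan is to deduce the claim from the structure theorem for states that saturate strong subadditivity (equivalently, have vanishing conditional mutual information), due to Hayden, Jozsa, Petz and Winter, and then to use the two remaining hypotheses to force the direct sum appearing in that theorem to have a single summand. As a preliminary step I would record what properties~2 and~3 give together with purity. Since $\rho_{ABCD}$ is pure and $\rho_{AB}=\tau_{AB}$, the Schmidt decomposition across the cut $AB:CD$ shows that $\rho_{CD}$ has the same nonzero spectrum as $\rho_{AB}$; as $\abs{CD}=\abs{AB}$ this forces $\rho_{CD}=\tau_{CD}$ and makes $\rho_{ABCD}$ maximally entangled across $AB:CD$. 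In particular all four single-party marginals are maximally mixed.

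Next I would invoke the structure theorem in its global (purified) form: property~1, $I(A;B|C)=0$, guarantees decompositions $\mathcal H_C=\bigoplus_j C_j^L\otimes C_j^R$ and $\mathcal H_D=\bigoplus_j D_j^L\otimes D_j^R$ and probabilities $p_j$ such that
\[ \ket{\psi}_{ABCD}=\bigoplus_j\sqrt{p_j}\,\ket{\alpha_j}_{A C_j^L D_j^L}\otimes\ket{\beta_j}_{B C_j^R D_j^R}, \]
where the $A$-side factor $\ket{\alpha_j}$ carries the `left' blocks of $C$ and $D$ and the $B$-side factor $\ket{\beta_j}$ the `right' blocks. This already produces the desired split into an $A$-part and a $B$-part, but coherently superposed over the block label $j$.

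The crux of the argument --- and the step I expect to be the main obstacle --- is to show that this superposition collapses to a single block. Writing $C_j:=C_j^L\otimes C_j^R$ and $D_j:=D_j^L\otimes D_j^R$, I would trace out $AB$ and note that every term of $\ket{\psi}\bra{\psi}$ contributes matrix elements of $\rho_{CD}$ whose $C$- and $D$-indices lie in the \emph{same} block; hence $\rho_{CD}$ is supported inside $\bigoplus_j C_j\otimes D_j$. For two or more blocks this is a proper subspace of $\mathcal H_C\otimes\mathcal H_D$, since its dimension $\sum_j\abs{C_j}\abs{D_j}$ omits the off-diagonal cross terms $C_j\otimes D_k$ with $j\neq k$. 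This contradicts the full rank of $\rho_{CD}=\tau_{CD}$, so there is a single block and $\ket{\psi}=\ket{\alpha}_{A C_L D_L}\otimes\ket{\beta}_{B C_R D_R}$.

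Finally I would factor the two pure tripartite pieces separately. From $\rho_{CD}=\tau_{CD}=\tau_{C_L D_L}\otimes\tau_{C_R D_R}$ and uniqueness of product decompositions I get $\rho_{C_L D_L}=\tau_{C_L D_L}$; since $\ket{\alpha}_{A C_L D_L}$ is pure with $\rho_A=\tau_A$, matching ranks gives $\abs{A}=\abs{C_L}\abs{D_L}$, so $\ket{\alpha}$ is maximally entangled across $A:(C_L D_L)$. Because $\rho_{C_L D_L}$ is completely degenerate, I am free to choose its Schmidt basis to be the product basis of $C_L\otimes D_L$; the conjugate basis on $A$ then defines a factorization $A=A_L\otimes A_R$ in which $\ket{\alpha}=\Phi^+_{A_L C_L}\otimes\Phi^+_{A_R D_L}$. (Note it is the degeneracy, not merely $I(C_L;D_L)=0$, that is needed here, as a product marginal alone does not force a pure tripartite state to factorize.) The same argument applied to $\ket{\beta}$ yields $B=B_L\otimes B_R$ and $\ket{\beta}=\Phi^+_{B_L C_R}\otimes\Phi^+_{B_R D_R}$, giving the claimed normal form; the dimension formula $\log\abs{A_L}=\frac12 I(A;C)$ then follows by evaluating $I(A;C)$ on this state, where the only surviving correlation is the pair $\Phi^+_{A_L C_L}$.
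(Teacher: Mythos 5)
Your proposal is correct and follows essentially the same route as the paper's proof: establish $\rho_{CD}=\tau_{CD}$ from assumptions 2 and 3, apply the Hayden--Jozsa--Petz--Winter structure theorem, purify blockwise into $D$ (the dimension matching $\abs{D}=\rank\rho_{ABC}$ being what makes the `global purified form' legitimate), collapse to a single block because $\tau_{CD}$ has support on the off-diagonal blocks $C_j\otimes D_k$, and then factor the two tripartite pieces into maximally entangled pairs by dimension counting. The only cosmetic differences are that you force $\abs{A}=\abs{C_L}\abs{D_L}$ via $\rho_A=\tau_A$ rather than via $\abs{AB}=\abs{CD}$, and you build the final factorization of $A$ by a Schmidt-basis choice where the paper invokes Uhlmann's theorem; these are equivalent.
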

\begin{proof}
  We note that assumptions~2 and 3 together imply that
  \begin{equation}
    \label{eq:maxmix}
    \rho_{CD} = \tau_{CD}
  \end{equation}
  and so
  \begin{equation}
    \label{eq:rank}
    \rank \rho_{ABC} = \rank \tau_D = \lvert D \rvert.
  \end{equation}
  From~\cite{hayden2004structure}, we know that if $\rho_{ABC}$ is a quantum state with $I(A;B|C)=0$ (assumption~1), then we can decompose into sectors $C = \bigoplus_i C_i$ and $C_i = C_{L_i} \ot C_{R_i}$ such that
  \begin{equation}
    \label{eq:decompo A}
    \rho_{ABC} = \sum_i p_i \, \rho^{(i)}_{A C_{L_i}} \ot \rho^{(i)}_{B C_{R_i}}.
  \end{equation}
  for some probability distribution $p_i$ and quantum states $\rho_{AC_{L_i}}^{(i)}$, $\rho_{BC_{R_i}}^{(i)}$.
  Now~\eqref{eq:rank} shows that
  \[ \lvert D \rvert = \sum_i \rank \rho^{(i)}_{A C_{L_i}} \times \rank \rho^{(i)}_{B C_{R_i}}. \]
  Thus we can decompose into sectors $D = \bigoplus_i D_i$, $D_i = D_{L_i} \ot D_{R_i}$ (where $\lvert D_{L_i} \rvert = \rank \rho^{(i)}_{A C_{L_i}}$, etc.) and purify individually to obtain a purification of $\rho_{ABC}$ of the form
  \begin{equation}
    \label{eq:puri}
    \sum_i \sqrt{p_i} \, \ket{\eta^{(i)}_{A C_{L_i} D_{L_i}}} \ot \ket{\xi^{(i)}_{B C_{R_i} D_{R_i}}}.
  \end{equation}
  By Uhlmann's theorem (see, e.g., \cite{wilde2013quantum}), the purification in~\eqref{eq:puri} only differs by a local unitary on $D$ from the four-party pure state $\rho_{ABCD}$, which likewise purifies $\rho_{ABC}$, and hence it suffices to establish the normal form for~\eqref{eq:puri}.
  Furthermore, they have the same reduced state on $CD$, namely, the maximally mixed state~\eqref{eq:maxmix}, which is unitarily invariant.
  Thus:
  \begin{equation}
    \label{eq:crucial}
    \bigoplus_{i,i'} \sqrt{p_i p_{i'}} \tr_{AB}\Big[
      \ket{\eta^{(i)}_{A C_{L_i} D_{L_i}}}\!\!\bra{\eta^{(i')}_{A C_{L_{i'}} D_{L_{i'}}}} \ot \ket{\xi^{(i)}_{B C_{R_i} D_{R_i}}}\!\!\bra{\xi^{(i')}_{B C_{R_{i'}} D_{R_{i'}}}}
    \Big] = \tau_{CD}
  \end{equation}
  We may think of the left-hand side as a big block matrix with respect to $\bigoplus_{i,j} C_i \ot D_j$ which is only supported on blocks where $i = j$.
  The right-hand side on the other hand is supported on all blocks $C_i \ot D_j$.
  Thus~\eqref{eq:crucial} can only be true if there is only a single sector (and hence no pair with $i \neq j$).
  Suppressing the index $i$, this means that, in fact, $C = C_L \ot C_R$ and $D = D_L \ot D_R$, so that \eqref{eq:decompo A} becomes
  \[ \rho_{ABC} = \rho_{AC_L} \ot \rho_{BC_R} \]
  and its purification~\eqref{eq:puri} reads
  \begin{equation}
  \label{purif1}
    \ket{\eta_{AC_LD_L}} \ot \ket{\xi_{BC_RD_R}}.
  \end{equation}
  Moreover, \eqref{eq:crucial} becomes
  \begin{equation*}
    \eta_{C_LD_L} \ot \xi_{C_RD_R} = \tau_{CD},
  \end{equation*}
  and so both $\eta_{C_LD_L} = \tau_{C_LD_L}$ and $\xi_{C_RD_R} = \tau_{C_RD_R}$ are maximally mixed.
  In particular,
  \begin{equation*}
    \lvert A \rvert \geq \lvert C_L D_L \rvert, \quad
    \lvert B \rvert \geq \lvert C_R D_R \rvert
  \end{equation*}
  by the Schmidt decomposition.
  But $\lvert AB \rvert = \lvert CD \rvert$ by assumption~3, thus in fact
  \begin{equation*}
    \lvert A \rvert = \lvert C_L D_L \rvert, \quad
    \lvert B \rvert = \lvert C_R D_R \rvert.
  \end{equation*}
  Thus $\ket{\eta_{AC_LD_L}}$ is maximally entangled between $A$ and $C_LD_L$, and $\ket{\xi_{BC_RD_R}}$ is maximally entangled between $B$ and $C_RD_R$.
  If we decompose $A = A_L \ot A_R$ and $B = B_L \ot B_R$ with $\abs{A_L} = \abs{C_L}$, $\abs{A_R}=\abs{D_L}$, etc., then we have another purification of $\eta_{C_L D_L}\ot\xi_{C_R D_R}$, given by a tensor product of maximally entangled states:
  \begin{equation}
    \big( \ket{\Phi^+_{A_L C_L}} \ot \ket{\Phi^+_{A_R D_L}} \big) \ot
    \big( \ket{\Phi^+_{B_L C_R}} \ot \ket{\Phi^+_{B_R D_R}} \big).
    \label{purif2}
  \end{equation}
  Thus, by another application of Uhlmann's theorem there exist local unitaries on $A$,$B$ that transform~\eqref{purif1} into~\eqref{purif2}.
  Absorbing all local unitaries into the tensor product decompositions, we obtain the desired result.
\end{proof}

The normal form in \cref{thm:unitaries} follows now readily from \cref{prp:zero cmi}, since the Choi state $\rho_{ABCD}$ associated with the unitary $U_{AB\to CD}$ satisfies all three assumptions of the proposition.
The formula for the dimensions of the subsystems $A_L$ etc.\ follows directly from the normal form.
For the converse, we observe that $-I_3 = I(C;D|A) = S(AC) + S(AD) - S(A) - S(B)$, where $\rho_{AC} = \Phi^+_{A_L C_L} \ot \tau_{A_R} \ot \tau_{C_R}$ and similarly for $\rho_{AD}$.
Hence, $S(AC) = \log\,\abs{A_R C_R} = \log\,\abs{A_R B_L}$ and $S(AD) = \log\,\abs{A_L D_R} = \log\,\abs{A_L B_R}$, while $S(A) = \log\,\abs A$ and $S(B) = \log\,\abs B$.
So, $S(AC) + S(AD) = \log \abs{A B} = S(A) + S(B)$, which implies that $I(C;D|A) = 0$.

\Cref{thm:unitaries} does not appear to directly generalize to isometries $V_{AB\to CD}$.
For example, consider the three-party GHZ state $\ket{\text{GHZ}}_{ACD} = (\ket{000} + \ket{111}) / \sqrt{2}$, which is the Choi state of the isometry mapping $\ket{0} \mapsto \ket{00}$ and $\ket{1} \mapsto \ket{11}$.
This is a special case of an isometry $V_{AB \to CD}$ where $B$ is trivial, and $I_3(A;C;D)$ is zero, just as for any tripartite pure state.
However, the GHZ state is clearly not of the form in \cref{prp:zero cmi}, even if we allow for maximally entangled states between $C,D$.
This can be seen by the fact that tracing out any one of the $A,C,D$ in the GHZ state gives a separable state, which is impossible for a triple of maximally entangled states unless they are all trivial.


\medskip

It is well-established in quantum information literature that the conditional information can be operationally interpreted in terms of the recoverability of quantum information for tripartite quantum states~\cite{fawzi2015quantum,sutter2015universal}. See also~\cite{winter2012stronger,kim2013application,zhang2013conditional,berta2015renyi}. In particular, it is known that, for any quantum state $\rho_{ABD}$,
\[ \norm{\rho_{ABD} - \mathcal{R}_{D \to BD} \left(\rho_{AD}\right)}_1 \le 2 \sqrt{1 - e^{-I(A;B \vert D)/2}}. \]
where $\norm{X}_1:=\tr\sqrt{X^\dagger X}$ is the \emph{trace norm} and $\mathcal R_{D\to BD}$ a quantum channel that only depends on $\rho_{BD}$~\cite{sutter2015universal}.
Applied to the Choi state of a bipartite unitary $U_{AB\to CD}$ with $-I_3\le\varepsilon$, we therefore obtain a \emph{recovery map} with
\begin{equation}
\label{eq:recovery}
  \norm{\rho_{ABD} - \mathcal{R}_{D \to BD}(\rho_{AD})}_1 \le \sqrt{2 \varepsilon}.
\end{equation}
This is immediate from \cref{thm:unitaries} when $I_3 =0$.
This recovery property of the state from local information is in stark contrast with the maximally scrambling case, such as in the model of black hole evaporation from~\cite{hayden2007black}, and we discuss this in more detail on p.~\pageref{black holes}. 

In contrast to the interpretation in terms of recovery maps, \cref{thm:unitaries} itself is not robust in the sense that there exist unitaries for which $I_3$ is arbitrarily close to zero, while their distance to any unitary of the form of \cref{thm:unitaries} stays bounded away from zero.
Here, we measure distance using the \emph{diamond norm} between two quantum channels $\mathcal N$ and $\mathcal M$,
\begin{equation}
\label{eq:diamond}
  \norm{\mathcal N - \mathcal M}_\diamond = \max_n \max_{\rho_{AR}} \norm{\bigl( \id_{R} \ot \mathcal N_{A \to B}  - \id_{R} \ot \mathcal M_{A \to B} \bigr)(\rho)}_1,
\end{equation}
where we optimize over all states $\rho$ on $AR$, with $R$ an auxiliary $n$-dimensional Hilbert space ($n = \abs A$ is sufficient).
As the trace distance quantifies how well one can experimentally distinguish quantum states \cite{helstrom1969quantum}, the diamond norm is a natural measure of how well one can distinguish two quantum channels even with an auxiliary system.

Our construction is explicit and goes as follows.
We choose $A=B=C=D=\C^d$ and define a bipartite unitary $U_d$ that is maximally $I_3$-scrambling on some subspace and the identity otherwise.
More precisely,
\begin{equation}
  \label{eq:counterU}
  U_d \ket a \ket b = \begin{cases}
    U_S \ket{a} \ket{b} & 0 \leq a, b < d_S\\
    \ket{a} \ket{b} & \text{otherwise}
  \end{cases}
\end{equation}
for some $d_S\leq d$, where $U_S$ is a bipartite unitary $A_S B_S \to C_S D_S$ that is maximally $I_3$-scrambling, i.e., $I_3 = -2\log d_S$, with $A_S$ the subspace spanned by the first $d_S$ basis vectors of $A$, etc.
We prove the existence of such unitaries for arbitrary odd dimension $d_S$ in \cref{sec:maximal} below.
Then we have the following result:

\begin{prp}
  \label{prp:nonRobustZero}
  Let $d_S$ be an odd constant. Then the bipartite unitaries $U_d$ defined in~\eqref{eq:counterU} satisfy
  \begin{equation*}
    \lim_{d \to \infty} I_3(A;B;C)_{U_d} = 0.
   \end{equation*}
  However,
  \begin{equation*}
    \liminf_{d \to \infty} \inf_{U_0} \Vert U_d - U_0 \Vert_\diamond \geq 1> 0,
   \end{equation*}
  where the infimum is over all unitaries $U_0$ with vanishing tripartite information.
\end{prp}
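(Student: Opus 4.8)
The plan is to establish the two statements by separate arguments. For the limit $\lim_{d\to\infty} I_3(U_d)=0$, I would compare $U_d$ against the identity channel, which is itself minimally $I_3$-scrambling. Since $U_d-\id$ is supported on the $d_S^2$-dimensional scrambling block, where it equals $U_S-\id$, we have $\norm{U_d-\id}_2^2\le 4d_S^2$, and the corresponding (pure) Choi states satisfy $\norm{\ket{\Psi_{U_d}}-\ket{\Psi_{\id}}}^2=\tfrac{1}{d^2}\norm{U_d-\id}_2^2\le 4d_S^2/d^2$. Hence the reduced states $\rho_{AC}$ and $\rho_{AD}$ lie within trace distance $O(d_S/d)$ of their identity-channel counterparts. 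Writing $I_3=S(A)+S(B)-S(AC)-S(AD)$ with $S(A)=S(B)=\log d$ fixed by unitarity, it remains to control $S(AC)$ and $S(AD)$ via the Fannes--Audenaert inequality. The only delicate point is that these entropies live on a $d^2$-dimensional system, contributing a factor $2\log d$; but this is dominated by the $O(d_S/d)$ perturbation, so $\abs{S(AC)_{U_d}-S(AC)_{\id}}=O(d_S\log d/d)\to 0$, and likewise for $AD$. Since $I_3=0$ for the identity, this gives $I_3(U_d)\to 0$.

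For the diamond-norm lower bound I must rule out proximity to every criss-cross unitary $U_0$ of the form guaranteed by \cref{thm:unitaries}. The naive estimate through the trace distance of Choi states is worthless here --- taking $U_0=\id$ it is already $O(d_S/d)\to 0$, which is exactly the nonrobustness being asserted --- so I must use the adaptive reference permitted by the diamond norm. The distinguishing input drives the scrambling block from both sides: feed half of a maximally entangled state $\Phi^+_{R_A A_S}$ into $A$, with $R_A$ an auxiliary system of the fixed dimension $d_S$, together with the maximally mixed state $\tau_{B_S}$ on $B$, both supported on the $d_S$-dimensional subspaces $A_S,B_S$. On this input $U_d$ acts as $U_S$; since $U_S$ is maximally $I_3$-scrambling, the characterization of maximal scrambling (\cref{sec:maximal}) renders both residual channels $A_S\to C_S$ and $A_S\to D_S$ completely depolarizing, so the outputs decouple $R_A$ from $C$ and from $D$ individually, giving $I(R_A;C)=I(R_A;D)=0$. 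Any criss-cross $U_0$, on the other hand, routes the $A$-input locally, $A_L\to C_L$ and $A_R\to D_L$, with $C_R,D_R$ depending only on $B$; a short computation then shows $I(R_A;C)+I(R_A;D)=2\log d_S$ for every tensor decomposition of $U_0$, so at least one of the two mutual informations is at least $\log d_S$.

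The remaining and most delicate step converts this dimension-independent mutual-information gap into a diamond-norm bound, and this is where the main obstacle lies: a direct appeal to continuity of the mutual information would reintroduce the forbidden factor $\log\abs C=\log d$. The resolution is that $R_A$ is maximally mixed in both outputs and has the fixed dimension $d_S$, so I would write $I(R_A;C)=S(R_A)-S(R_A\vert C)$ and apply the Alicki--Fannes--Winter continuity bound to the conditional entropy $S(R_A\vert C)$, whose dimensional factor is only $\log\abs{R_A}=\log d_S$. With $I(R_A;C)_{U_d}=0$ and $I(R_A;C)_{U_0}\ge\log d_S$ (using the $C$- or $D$-marginal as appropriate), this forces the trace distance $\norm{\omega^{U_d}_{R_A C}-\omega^{U_0}_{R_A C}}_1$ to be bounded below by a positive constant (of order $1-O(1/\log d_S)$), uniformly in $d$. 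Since both reduced outputs arise from a single admissible input (whose reference dimension is at most $\abs{AB}$), partial-trace monotonicity gives $\norm{\mathcal U_d-\mathcal U_0}_\diamond\ge\norm{\omega^{U_d}_{R_A C}-\omega^{U_0}_{R_A C}}_1$, bounded below uniformly in $d$ and over all $U_0$. This already yields the nonrobustness $\liminf>0$; attaining the stated constant $1$ requires the sharp form of the continuity estimate, exploiting that $\omega^{U_d}_{R_A C}=\tau_{R_A}\ot\tau_{C_S}$ is exactly maximally mixed on a $d_S^2$-dimensional support and combining the $C$- and $D$-marginals via $I(R_A;C)+I(R_A;D)=2\log d_S$.
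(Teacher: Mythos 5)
Your first limit is handled essentially as in the paper: \cref{lem:nonRobustZero} likewise shows $\norm{\rho_{ABCD,d}-\Phi^+_{AC}\ot\Phi^+_{BD}}_1\le 4d_S/d$ (by a direct overlap computation rather than a Hilbert--Schmidt estimate of $U_d-\id$) and then applies the Fannes--Audenaert inequality; the $\log d$ dimension factor is killed by the $O(d_S/d)$ trace distance exactly as you say. No issues there.

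The diamond-norm bound is where you take a genuinely different route, and where there is a real gap. Your intermediate claims are correct: with input $\Phi^+_{R_AA_S}\ot\tau_{B_S}$ one has $I(R_A;C)=I(R_A;D)=0$ under $U_d$ (by \cref{prp:depolarizing} applied to $U_S$), and $I(R_A;C)+I(R_A;D)=2\log d_S$ under any criss-cross $U_0$ (purity of the $R_AC_LD_L$ marginal gives $S(R_AC_L)=S(D_L)$ and $S(R_AD_L)=S(C_L)$). The failure is in the conversion to trace distance. Your mutual-information gap is $\log d_S$, but the only dimension available to the Alicki--Fannes--Winter bound is $\abs{R_A}=d_S$, so the gap is charged against a factor $2\varepsilon\log d_S$ of the same size: one gets only $\norm{\cdot}_1\ge 1-2/\log d_S$, and the same after combining the $C$- and $D$-marginals. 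This is strictly below the claimed constant $1$ for every fixed $d_S$, and for $d_S=3$ --- which the proposition must cover, since $d_S$ is an arbitrary odd constant --- the bound is negative, hence vacuous; even your fallback claim of a positive constant fails there. The deferred ``sharp form of the continuity estimate'' is carrying the entire proof and is not supplied; since AFW is essentially tight, no generic continuity argument will close this. The paper avoids the trap by placing the distinguishing statistic on the full $d$-dimensional output $C$ rather than on a $d_S$-dimensional reference: for the input $\sigma_{A_S}\ot\tau_B$, the $U_d$-output on $C$ is supported on $C_S$ and has entropy at most $\log d_S$, whereas any criss-cross output carries a factor $\tau_{C_R}$ with $\abs{C_R}\ge\sqrt{d}$ (after possibly exchanging the roles of $A$ and $B$), so its entropy is at least $\tfrac12\log d$. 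Fannes--Audenaert on the $d$-dimensional system then gives $\norm{\sigma_C-\sigma'_C}_1\ge 1-(2+2\log d_S)/\log d\to 1$: both the entropy gap and the dimension factor grow like $\log d$, which is what lets the bound tend to $1$. To salvage your approach you would need a distinguishing system whose log-dimension grows with $d$ while the $U_d$-output remains confined to the scrambling block.
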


That is, by making $U_d$ $I_3$-scrambling on a subspace whose relative size goes to zero for large $d$, we can make the triparite information go to zero while still leaving a nonzero subspace that is $I_3$-scrambling, thereby keeping the diamond norm finitely bounded from zero. It is also interesting to note that the Choi state of $U_d$ converges to that of the identity channel, a quantum Markov chain state, in trace distance, while the channel itself does not converge to the identity nor any minimally $I_3$-scrambling unitary in diamond norm.

On the other hand, we note that in terms of simultaneous local one-shot quantum capacities of $U_d$, $\lim_{d \to \infty} Q_{A \to CD} - (R_{A \to C} + R_{A \to D}) =0$.
Indeed, by coding in the complementary subspace of $A_S$, $R_{A \to C} \ge \log(d-d_S)$ can be achieved.
Asymptotically, this goes like $\log d$, since
\begin{equation*}
  \lim_{d\to\infty} \log d - \log \bigl(d-d_S\bigr) = - \lim_{d\to\infty} \log \Bigl( 1 - \frac {d_S}d \Bigr) = 0.
\end{equation*}
Thus, since $R_{A \to D} \ge 0$, $\lim_{d\to\infty} Q_{A \to CD} - (R_{A \to C} + R_{A \to D}) \le 0$.
The other inequality is trivial, so we have equality.
Hence, one might be tempted to interpret $I_3$ as the difference between the sum of the simultaneous local quantum capacities $A \to C, D$ and the maximum possible value $\log \abs{A}$, which is true in this example for the limit of large $d$.
For finite $d$, however, we can find examples where this interpretation fails.

The interpretation can be partially salvaged, however, by considering instead entanglement-assisted classical communication with random codes generated using maximally entangled states while fixing the input to $B$ to be maximally mixed.
This follows from the observation
\begin{equation}
\label{eq:entang-assist}
  I_3 = I(A;C) + I(A;D) - I(A;CD) = I(A;C) + I(A;D) - 2\log\abs{A}
\end{equation}
and the fact that the entanglement-assisted classical communication rate of a channel $\mathcal{N}_{A\to C}$ using such a code is given by the mutual information $I(A;C)$ of its Choi state~\cite{bennett1999entanglement, bennett2002entanglement}.
Since the mutual information $I(A;CD) = 2 \log\abs{A}$ is as large as it can be, it is not just an achievable rate but in fact the capacity of the $A \rightarrow CD$ channel.
\Cref{eq:entang-assist} therefore states that the sum of the two entanglement-assisted achievable rates is bounded above by the entanglement-assisted capacity.

\Cref{prp:nonRobustZero} is a consequence of the following technical estimates proved in \cref{app:minimal}:

\newcommand{\lemmaNonRobustZero}[4]{
\begin{#1}
  #2
  Consider the unitaries $U_d$ from~\eqref{eq:counterU} and their Choi states $\rho_{ABCD,d}$. Then,
  \begin{equation}
  #3
    \norm{\rho_{ABCD,d} - \Phi^+_{AC} \ot \Phi^+_{BD}}_1 \le 4 \, \frac{d_S}{d}
   \end{equation}
  and
  \begin{equation}
  #4
    \inf_{U_0} \Vert U_d - U_0 \Vert_\diamond \geq 1 - \frac {2 + 2\log d_S} {\log d}
   \end{equation}
  where the infimum is over all unitaries $U_0$ with vanishing tripartite information.
\end{#1}
}
\newtheorem*{nonRobustZero}{\Cref{lem:nonRobustZero}}
\lemmaNonRobustZero{lem}{\label{lem:nonRobustZero}}{\label{eq:nonRobustZero state dist}}{\label{eq:nonRobustZero diamond dist}}

Indeed, \eqref{eq:nonRobustZero state dist} implies that the difference between the subsystem entropies vanishes in the limit of large $d$.
This follows from the Fannes-Audenaert inequality~\cite{fannes1973continuity,audenaert2007sharp}, which asserts that, for any two quantum states $\rho$ and $\sigma$ on a $D$-dimensional Hilbert space,
\begin{equation}
\label{eq:fannes audenaert}
  \abs{S(\rho) - S(\sigma)} \leq T \log(D-1) + h(T),
\end{equation}
where $T = \frac12\norm{\rho-\sigma}_1$ and $h(T)=-T\log T-(1-T)\log(1-T)$ is the binary entropy function, which can be upper bounded as $h(T) \leq 2\sqrt T$.
But $\Phi^+_{AC} \ot \Phi^+_{BD}$ is the Choi state of the identity channel, which has zero tripartite information.
Hence the tripartite information $I(A;B;C)_{U_d}$ goes to zero in the limit of large $d$.
In the same limit, the right-hand side of \eqref{eq:nonRobustZero diamond dist} converges to $1$.
This concludes the proof of \cref{prp:nonRobustZero}.

\section{Maximal scrambling}
\label{sec:maximal}
We now consider the opposite extreme where $I_3 \approx -2 \log \min\{\abs{A},\dots,\abs{D}\}$ and compare it to the results we obtained in the minimally $I_3$-scrambling case.
Note that this is the most negative value it can take since
\begin{equation}
\label{eq:max i3 lower bound}
\begin{aligned}
  I_3
  &= -I(A;B|C)
  = S(C) + S(ABC) - S(AC) - S(BC) \\
  &= S(C) + S(D) - S(AC) - S(AD) \\
  &= I(A;C) + I(A;D) - 2S(A) \\
  &= I(A;C) + I(A;D) - 2\log\abs A \\
  &\geq -2\log \abs{A}
\end{aligned}
\end{equation}
since the mutual information is always nonnegative.
A similar inequality holds for the other subsystems.

We first discuss the existence of maximally $I_3$-scrambling unitaries in the case where $A=B=C=D=\CC^d$.
Clearly, $I_3 = -2 \log d$ if and only if any bipartite subsystem is maximally mixed, i.e., if $S(AB)=S(AC)=\dots=2\log d$.
Such unitaries are precisely four-party perfect tensors, i.e., tensors that are unitary from any bipartition to the complement, as pointed out in~\cite{hosur2015chaos}.
This establishes the existence of maximally $I_3$-scrambling unitaries in sufficiently large prime dimension $d$, since a stabilizer state chosen at random will be a perfect tensor with high probability~\cite{hayden2016holographic}.
On the other hand, the following explicit construction achieves the same for any odd dimension $d$:
\begin{equation}
\label{eq:odd scrambling}
  U_S \ket{i}_A \ket{j}_B = \ket{i+j}_C \ket{i-j}_D,
\end{equation}
where all arithmetic is modulo $d$.
We require $d$ to be odd so that $U_S$ is unitary.
It can be readily verified that $I_3 = -2\log d$.
We note that~\eqref{eq:odd scrambling} is a straightforward generalization of the three-qutrit code from~\cite{keet2010quantum}.
It is interesting to observe that $U_S^2$ is minimally $I_3$-scrambling.
In this sense, a unitary that is maximally $I_3$-scrambling can still have a very small recurrence time.

The relationship to quantum error correcting codes can also be used to argue that there exists no maximally $I_3$-scrambling unitary for qubits ($d=2$).
Indeed, assume that such a unitary $U_{AB\to CD}$ exists and consider the isometry $V_{A\to BCD} := U_{AB'\to CD} \ket{\Phi^+_{BB'}}$ obtained by inputting one half of a maximally entangled state into $B$.
Then the perfect tensor property implies that we can correct for the erasure of any one of the output qubits $B$, $C$ and $D$.
In other words, $V_{A\to BCD}$ would be a code for the qubit erasure channel of length 3.
But this is ruled out by~\cite{grassl1997codes}.
Hence, such a $U$ does not exist.


\medskip

We return to the general setup, where the dimensions of the systems $A,\dots,D$ need not be equal, and consider the consequences of a unitary being maximally $I_3$-scrambling.
In particular, we consider the residual channels from a single input to a single output.
Then, we expect the channels residual channels $A \to C$ etc.\ to be noisy since quantum information should be delocalized.
Indeed, we find:

\begin{prp}
\label{prp:depolarizing}
  Let $U_{AB\to CD}$ be a maximally $I_3$-scrambling unitary and $\rho_{ABCD}$ its Choi state.
  If either $A$ or $C$ have the smallest dimension among the four subsystems then $\rho_{AC}$ is maximally mixed and $I(A;C)=0$.

  As a consequence, the residual channel $\mathcal N_{A\to C}[\sigma_A]=\tr_D[U (\sigma_A \ot \tau_B) U^\dagger]$ corresponding to the maximally mixed input on $B$ is completely depolarizing, i.e., its channel output is the maximally mixed state $\tau_C$ for any input state $\sigma_A$.%
\footnote{Dually, $\mathcal N_{B\to C}[\sigma_B]=\tr_D[U(\sigma^0_A\ot\sigma_B)U^\dagger]$ maps $\tau_B \mapsto \tau_C$ for \emph{any} choice of input state $\sigma^0_A$ at $A$.}
\end{prp}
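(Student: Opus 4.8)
The plan is to reduce the first claim entirely to the statement $I(A;C)=0$, and then to read off both the maximal mixedness of $\rho_{AC}$ and the structure of the residual channel as consequences. The two essential inputs are the entropic identity \eqref{eq:max i3 lower bound} and the fact, recorded at the start of the paper, that unitarity forces every single-system marginal of the Choi state to be maximally mixed, so that $\rho_A=\tau_A$, $\rho_B=\tau_B$, $\rho_C=\tau_C$, and $\rho_D=\tau_D$.

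First I would treat the case where $\abs A$ is the smallest dimension, so that maximal $I_3$-scrambling means $I_3=-2\log\abs A$. The chain \eqref{eq:max i3 lower bound} rewrites this as $I(A;C)+I(A;D)=0$, and since both mutual informations are nonnegative, each must vanish; in particular $I(A;C)=0$. If instead $\abs C$ is smallest, I would rerun the computation of \eqref{eq:max i3 lower bound} verbatim but starting from $-I_3=I(C;D|A)$, one of the equivalent expressions for $-I_3$ recorded just after \eqref{eq:neg tri is cmi}, and relabelling $A\leftrightarrow C$, $B\leftrightarrow D$ throughout. Using $\rho_C=\tau_C$ this produces $I_3=I(C;A)+I(C;B)-2\log\abs C$, so that $I_3=-2\log\abs C$ again forces $I(A;C)=0$. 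In either case, $I(A;C)=0$ gives $\rho_{AC}=\rho_A\ot\rho_C$, and since $\rho_A=\tau_A$ and $\rho_C=\tau_C$ are maximally mixed, I conclude $\rho_{AC}=\tau_{AC}$, which is the first assertion.

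For the consequence I would identify $\rho_{AC}$ with the Choi state of the residual channel $\mathcal N_{A\to C}$. Tracing the reference system $B$ out of $\Phi^+_{BB'}$ yields the maximally mixed input $\tau_{B'}$, whence
\[ \rho_{AC}=\tr_{BD}\bigl[U(\Phi^+_{AA'}\ot\Phi^+_{BB'})U^\dagger\bigr]=\tr_D\bigl[U(\Phi^+_{AA'}\ot\tau_{B'})U^\dagger\bigr]=(\id_A\ot\mathcal N_{A\to C})(\Phi^+_{AA'}), \]
which is exactly the Choi state of $\mathcal N_{A\to C}$ after identifying $A'$ with $A$. A channel whose Choi state is the product $\tau_A\ot\tau_C$ must be completely depolarizing: expanding $\rho_{AC}=\frac1{\abs A}\sum_{ij}\ketbra{i}{j}_A\ot\mathcal N_{A\to C}(\ketbra{i}{j})$ and matching against $\tau_A\ot\tau_C$ forces $\mathcal N_{A\to C}(\ketbra{i}{j})=\delta_{ij}\tau_C$, hence $\mathcal N_{A\to C}(\sigma_A)=\tau_C$ for every input by linearity.

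I expect the only genuine subtlety to be bookkeeping rather than any hard estimate. One point to handle carefully is invoking the ``smallest dimension'' hypothesis through the correct equality case of \eqref{eq:max i3 lower bound}, including its $A\leftrightarrow C$ relabelled version; the other is verifying that the Choi-state identification above really corresponds to the residual channel with the maximally mixed input $\tau_B$ fixed on $B$, rather than an arbitrary input state. Everything else is dictated by nonnegativity of mutual information together with the maximal mixedness of the single-system marginals, so no quantitative bound is required.
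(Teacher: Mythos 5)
Your proposal is correct and follows essentially the same route as the paper: in both cases you use the equality case of the chain \eqref{eq:max i3 lower bound} (and its $A\leftrightarrow C$, $B\leftrightarrow D$ relabelling) together with nonnegativity of mutual information to get $I(A;C)=0$, hence $\rho_{AC}=\tau_A\ot\tau_C$, and then identify $\rho_{AC}$ as the Choi state of $\mathcal N_{A\to C}$ to conclude the channel is completely depolarizing. The only difference is that you spell out the Choi-state matching $\mathcal N(\ketbra{i}{j})=\delta_{ij}\tau_C$ explicitly, which the paper leaves implicit.
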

\begin{proof}
  If the dimension of $A$ is smallest, maximal $I_3$-scrambling means that $I_3 = -2\log\abs A$.
  Thus it follows from~\eqref{eq:max i3 lower bound} that $I(A;C)=I(A;D)=0$, since the mutual information is always nonnegative.
  Similarly, if $C$ is smallest then we have $I_3=-2\log\abs C$, which implies that $I(A;C) = I(B;C) = 0$.

  In either case, we thus find that $I(A;C)=0$ and hence that $\rho_{AC}=\rho_A\ot\rho_C=\tau_{AC}$, since both $\rho_A$ and $\rho_C$ are maximally mixed.
  To see that this implies the second claim, we note that $\rho_{AC}$ is the Choi state of the residual channel $\mathcal N_{A\to C}$.
  Hence, $\mathcal N_{A'\to C}[\Phi^+_{AA'}] = \tau_{AC}$ and therefore $\mathcal N_{A\to C}[\sigma_A] = \tau_C$ for any input state $\sigma_A$.
\end{proof}


Completely depolarizing channels have zero capacity of any kind, in agreement with our expectation that the quantum information at $A$ gets fully delocalized for maximally mixed input at $B$.
In \cref{app:typical} we show that if $\abs{D} \gg \abs{AC}$ then $\rho_{AC}\approx\tau_{AC}$ for typical input states on $B$.
Moreover, if $\abs{D} \gg \abs{AC}^2$ then the residual channel $\mathcal N_{A\to C}$ is typically entanglement-breaking, in which case it still has zero quantum capacity.

In general, there exist input states on $B$ such that the corresponding residual channel $A\to C$ can still be used for communication.
For example, consider the unitary defined in~\eqref{eq:odd scrambling}.
If we fix the input on $B$ to a computational basis state $\ket0$, then
\begin{align*}
  \mathcal N_{A\to C}[\rho_A]
= \tr_D\big[U_S (\rho_A \ot \ket0\!\!\bra0) U_S^\dagger\big]
= \sum_i \braket{i|\rho_A|i} \ket i\!\!\bra i_C.
\end{align*}
Hence, the residual channel is the completely dephasing channel, which has maximal classical capacity. If we instead fix the input on $B$ to be in the state $\frac{1}{\sqrt{3}}(\ket{0} + \sqrt{2} \ket{1})$ and consider the $d=3$ case, we obtain a residual channel $A \to C$ with positive quantum capacity. To see this, we use the fact that the coherent information of a channel is a lower bound on the quantum capacity~\cite{lloyd1997capacity,shor2002quantum,devetak2005private}:
\begin{equation}
  Q(\mathcal{N}_{A\to C}) \ge I(\mathcal{N}_{A \to C}) \equiv \max_{\varphi_{RA}} I( R \rangle C)_{\mathcal{N}(\varphi)}
  \label{}
\end{equation}
where $I(R \rangle C) \equiv S(C) - S(RC)$ is the \textit{coherent information}. If we choose the input state $\ket{\varphi}_{RA} = \frac{1}{\sqrt{3}} (\ket{00} + \sqrt{2}\ket{11})$, we obtain $I(R \rangle C) = \frac{12}{9} - \frac{5}{9} \log 5 > 0$.

\medskip

We can also interpret \cref{prp:depolarizing} from the perspective of recovery of quantum information.
If we assume that the dimension of $A$ is smallest then both residual channels $A\to C$ and $A\to D$ are completely depolarizing.
Given only $D$, none of the quantum information at $A$ can be recovered, while if we supplement it with $B$, perfect recovery is possible.
More precisely, we can transfer entanglement from $A$ to $BD$ perfectly.
This follows from the fact that $\rho_{ABCD}$ and $\Phi^+_{AA'} \ot \Phi^+_{CC'}$ both purify the reduced state $\rho_{AC} = \tau_A\ot\tau_C$, which by Uhlmann's theorem implies the existence of a decoding operation $\mathcal D_{BD\to A'}$.

\phantomsection\label{black holes}
One of the motivations for studying scrambling unitaries comes from black hole physics.
The preceding interpretation applies naturally to the model of black hole evaporation in~\cite{hayden2007black} and was also discussed in~\cite{hosur2015chaos}.
We can schematically model black hole evaporation by a bipartite unitary time evolution where $A$ is half of a Bell pair whose other half $A'$ enters the black hole at time $t_0$,
$B$ is the Hawking radiation emitted before $t_0$, assumed to be maximally entangled with the black hole $B$ at $t_0$,
$C$ is the state of the remaining black hole at a later time $t_1$,
and $D$ is the Hawking radiation emitted in the interval $[t_0, t_1]$.
All indications are that black holes are highly scrambling~\cite{hayden2007black,sekino2008fast,shenker2014black,roberts2015diagnosing}. 
If we assume that they are maximally $I_3$-scrambling then we find that $A'$ cannot be recovered from the late-time Hawking radiation $D$ alone, while it would be possible when also given the old Hawking radiation $B$.
In contrast, if the process were minimally $I_3$-scrambling then someone without knowledge of quantum state at $A$ and with only the new Hawking radiation $D$ could apply a local operation $\mathcal R_{D \to B D}$ to approximately recover the old Hawking radiation, so that the overall tripartite state $\mathcal{R}_{D \to BD} (\rho_{AD})$ is close to $\rho_{ABD}$ (\cref{eq:recovery}).

\medskip

Lastly, we consider the approximate case, where $I_3 \approx -2 \log \min\{\abs{A},\dots,\abs{D}\}$.
For concreteness, we assume that the dimension of system $A$ is smallest among all four subsystems and $I_3 = -2\log\abs A+\varepsilon$.
Then, $I(A;D)\leq\varepsilon$ as a consequence of~\eqref{eq:max i3 lower bound} (cf.\ the proof of \cref{prp:depolarizing}).
Using Pinsker's inequality, this implies that $\norm{\rho_{AD} - \tau_A\ot\tau_D}_1 \leq \sqrt{2 \ln(2)\varepsilon}$.
In particular, if we put one half of a maximally entangled state into the residual channel $A\to D$, then the resulting state is close to being completely uncorrelated.
Likewise, $\rho_{AC} \approx \tau_A\ot\tau_C$, and hence $\rho_{ABCD}$ and $\Phi^+_{AA'} \ot \Phi^+_{CC'}$ still purify approximately the same state.
It follows, again by Uhlmann's theorem, that there still exists a quantum operation $\mathcal D_{BC\to A'}$ such that $\mathcal D_{BD\to A'}[\rho_{ABD}] \approx \Phi^+_{AA'}$.
In this sense, the recovery interpretation described above can be made robust.

On the other hand, the stronger conclusion of \cref{prp:depolarizing} is not robust in the sense that we can find unitaries such that the negative tripartite information goes to its maximal value, while the diamond norm~\eqref{eq:diamond} between the residual channel $\mathcal N_{A\to C}$ and the completely depolarizing channel remains finite. Furthermore, we find that there are such unitaries with nonvanishing one-shot zero-error quantum capacity.
That is, a unitary can be arbitrarily close to being maximally $I_3$-scrambling even though its residual channel can still transmit quantum information perfectly at a nonvanishing rate.
The sequence of unitaries we use is again \eqref{eq:counterU},
\begin{equation*}
  U_d \ket a \ket b = \begin{cases}
    U_S \ket{a} \ket{b} & 0 \leq a, b < d_S\\
    \ket{a} \ket{b} & \text{otherwise},
  \end{cases},
\end{equation*}
except this time $d_S$ will be large.
We still require that $d_S = d-d_0$ is odd, so that the existence of a maximally $I_3$-scrambling unitary $U_S$ is guaranteed.
Then we can then establish the following result:

\begin{prp}
  \label{prp:nonRobustMax}
  Let $d_0$ be a constant and consider the family of unitaries $U_d$ for odd $d_S = d-d_0$.
  Then,
  \begin{equation}
  \label{eq:nonRobustMax I_3}
    \lim_{d \to \infty} \bigl( I_3(A;B;C)_{U_d} + 2\log d \bigr) = 0,
   \end{equation}
  while the residual channels $\mathcal N_{A\to C,d}[\sigma_A]=\tr_D[U_d (\sigma_A \ot \tau_B) U_d^\dagger]$ have bounded distance away from the completely depolarizing channel $\Delta_{A\to C}$:
  \begin{equation}
  \label{eq:nonRobustMax diamond}
    \lim_{d \to \infty} \norm{\mathcal{N}_{A \to C} - \Delta_{A \to C}}_\diamond = 2 > 0.
  \end{equation}
  Moreover, their one-shot zero-error quantum capacities $Q_{A\to C,d}$ can be lower bounded as
  \begin{equation}
  \label{eq:nonRobustMax R}
    Q_{A\to C,d} \geq \log d_0 > 0.
   \end{equation}
\end{prp}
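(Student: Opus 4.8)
The plan is to recognize the Choi state of $U_d$ as a small perturbation of the Choi state of the maximally scrambling block $U_S$. First I would write the pure Choi state as an orthogonal superposition
\[ \ket{\rho}_{ABCD} = \frac{d_S}{d}\,\ket{\psi_S} + \frac{\sqrt{d^2-d_S^2}}{d}\,\ket{\psi_I}, \]
where $\ket{\psi_S}$ is the normalized Choi state of $U_S$, supported on the scrambling subspace $A_S B_S C_S D_S$, and $\ket{\psi_I}$ is the normalized Choi state of the identity on the complementary region where $a \geq d_S$ or $b \geq d_S$. Matching the $A$ register forces $a < d_S$ and matching the $B$ register forces $b < d_S$, which is incompatible with the support of $\ket{\psi_I}$, so $\braket{\psi_S|\psi_I}=0$ and hence $\braket{\psi_S|\rho} = d_S/d$. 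The trace distance of pure states then gives $\norm{\state{\rho} - \state{\psi_S}}_1 = \frac2d\sqrt{d^2-d_S^2} = \frac2d\sqrt{d_0(2d-d_0)}$, which decays like $d^{-1/2}$ since $d_S = d - d_0$.

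For \eqref{eq:nonRobustMax I_3} I would use the identity $I_3 + 2\log d = I(A;C) + I(A;D)$ from \eqref{eq:max i3 lower bound} (valid as all four systems have dimension $d$), so it suffices to show both mutual informations vanish. Since $\rho_{ABCD}$ is the Choi state of a unitary, $\rho_A$ and $\rho_C$ are exactly maximally mixed, so $I(A;C) = 2\log d - S(AC)$. By monotonicity of trace distance under partial trace, $\rho_{AC}$ is $O(d^{-1/2})$-close to $(\psi_S)_{AC} = \tau_{A_S C_S}$, which by \cref{prp:depolarizing} has entropy $2\log d_S$. The delicate point, and the main obstacle, is that the Hilbert space dimension grows, so the continuity estimate must beat a $\log d$ factor: applying Fannes--Audenaert \eqref{eq:fannes audenaert} gives $\abs{S(AC) - 2\log d_S} \le T\log(d^2-1) + h(T)$ with $T = O(d^{-1/2})$, and since $d^{-1/2}\log d \to 0$ and $h(T)\to0$, the right-hand side vanishes. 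Hence $I(A;C) = 2\log d - 2\log d_S + o(1) = -2\log(1 - d_0/d) + o(1) \to 0$, and symmetrically $I(A;D)\to0$, proving \eqref{eq:nonRobustMax I_3}.

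For the remaining two claims I would compute the residual channel $\mathcal N_{A\to C}$ on basis states. Averaging over the maximally mixed input on $B$ and splitting the sum into $b < d_S$ and $b \geq d_S$, the complete depolarization of $U_S$ from \cref{prp:depolarizing} yields, for $a < d_S$,
\[ \mathcal N_{A\to C}[\state a] = \frac{d_S}{d}\,\tau_{C_S} + \frac{d_0}{d}\,\state a_C, \]
while for $a \geq d_S$ one has $U_d\ket a\ket b = \ket a\ket b$ for every $b$, so $\mathcal N_{A\to C}$ acts as the identity there: $\mathcal N_{A\to C}[\ketbra{a}{a'}] = \ketbra{a}{a'}_C$ for all $a,a'\geq d_S$. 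The bound \eqref{eq:nonRobustMax diamond} then follows by feeding in a single basis state $\ket a$ with $a \geq d_S$ (which exists since $d_0\geq1$): $\mathcal N_{A\to C}$ leaves it pure while $\Delta_{A\to C}$ returns $\tau_C$, and $\norm{\state a - \tau_C}_1 = 2 - \frac2d \to 2$; combined with the general upper bound $\norm{\mathcal N_{A\to C} - \Delta_{A\to C}}_\diamond \leq 2$ this gives the limit. Finally, the identity action above shows that the $d_0$-dimensional subspace spanned by $\{\ket a : a \geq d_S\}$ is transmitted with zero error in a single channel use, so encoding into it and decoding trivially gives $Q_{A\to C,d}\geq\log d_0$, establishing \eqref{eq:nonRobustMax R}. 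The two latter parts are thus elementary once the channel formula is in hand; only the mutual-information limit requires the careful dimension-dependent continuity argument.
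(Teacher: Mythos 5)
Your proof is correct, and for the channel-distance bound \eqref{eq:nonRobustMax diamond} and the capacity bound \eqref{eq:nonRobustMax R} it coincides with the paper's argument (the paper also feeds in a computational basis state outside the scrambling subspace, on which $\mathcal N_{A\to C}$ acts as the identity). Where you genuinely diverge is in the proof of \eqref{eq:nonRobustMax I_3}. The paper (\cref{lem:nonRobustMax}, proved in \cref{app:maximal}) works directly with the marginals: it shows $\rho_{AD,d}=\tau_{AD}$ \emph{exactly} by noting the absence of cross-terms, and bounds $\norm{\rho_{AC,d}-\tau_{AC}}_1\le 8\,d_0/d$ by an explicit block decomposition of $\rho_{AC,d}$ over $A_SC_S\op A_0C_0\op\dots$, before invoking Fannes--Audenaert. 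You instead make a single global observation --- the pure Choi state of $U_d$ has overlap $d_S/d$ with the (orthogonally complemented) Choi state of $U_S$, hence trace distance $\tfrac2d\sqrt{d_0(2d-d_0)}=O(\sqrt{d_0/d})$ --- and push this through partial trace and Fannes--Audenaert, comparing $S(AC)$ to $2\log d_S$ rather than to $2\log d$. This is cleaner and avoids the matrix computation entirely, and you correctly identify and handle the one delicate point, namely that the continuity estimate must beat the growing $\log(d^2-1)$ factor, which it does since $d^{-1/2}\log d\to0$. The trade-off is a weaker quantitative rate ($O(d^{-1/2}\log d)$ versus the paper's $O(d^{-1}\log d)$) and the loss of the exact statement $\rho_{AD,d}=\tau_{AD}$, which the paper reuses in \cref{app:renyi} to compute the R\'enyi-2 tripartite information; for the limit claimed in the proposition, however, your bound is entirely sufficient. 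Your explicit formula $\mathcal N_{A\to C}[\state{a}]=\tfrac{d_S}{d}\tau_{C_S}+\tfrac{d_0}{d}\state{a}_C$ for $a<d_S$ is correct but not needed.
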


To establish \cref{prp:nonRobustMax}, we first note that the last bound~\eqref{eq:nonRobustMax R} is immediate, since we can code perfectly using the $d_0$-dimensional subspaces.
The first two bounds, \eqref{eq:nonRobustMax I_3} and \eqref{eq:nonRobustMax diamond}, follow from the following lemma, proved in \cref{app:maximal}, together with the Fannes-Audenaert inequality~\eqref{eq:fannes audenaert} that we similarly used to establish \cref{prp:nonRobustZero}.

\newcommand{\lemmaNonRobustMax}[2]{
\begin{#1}
  #2
  Consider the unitaries $U_d$ from~\eqref{eq:counterU} and their Choi states $\rho_{ABCD,d}$.
  Then $\rho_{AD,d}$ is maximally mixed, and
  \begin{equation*}
    \norm{\rho_{AC,d} - \tau_{AC}}_1 \leq 8 \frac {d_0} d,
  \end{equation*}
  where $d_0 = d - d_S$.
  On the other hand, if $d_S < d$ then
  \begin{equation*}
    \norm{\mathcal{N}_{A \to C} - \Delta_{A \to C}}_\diamond \geq 2 - \frac{2}{d}.
   \end{equation*}
\end{#1}
}

\newtheorem*{nonRobustMax}{\Cref{lem:nonRobustMax}}
\lemmaNonRobustMax{lem}{\label{lem:nonRobustMax}}

\section{Tripartite information and state redistribution}
\label{sec:general}

We now briefly discuss the meaning of general values of the tripartite information.
Naturally, we would like to look for operational interpretations that hold in general.
Using the equivalence between tripartite information and conditional mutual information, \eqref{eq:neg tri is cmi}, one such interpretation is given by the task of \emph{quantum state redistribution}, in which a party holding two quantum systems is to transfer one of the systems to a party holding one~\cite{devetak2008exact}.
Specifically, given many copies of a quantum state $\rho_{ACD}$ with purification $\rho_{ABCD}$, a party with $AC$ can transmit $A$ to a party with $D$ using a rate of $\frac{1}{2}I(A;B|D)$ qubits of communication, $\frac{1}{2} I(A;C) - \frac{1}{2}I(A;D)$ ebits (i.e., shared Bell pairs of maximally entangled qubits) and no classical communication.
Conversely, $\frac{1}{2} I(A;B|D)$ is the minimum rate of quantum communication required by any state redistribution protocol.
This is consistent with the intuition of scrambling --- a strongly scrambling unitary will delocalize the information from the inputs so that observers at individual outputs have little knowledge of the inputs.
Hence, a large number of qubits should be required to transmit this information.

We can cross-check this intuition with our main results in the minimally and maximally scrambling cases and give explicit protocols in each case.
For the minimally $I_3$-scrambling case, we cross-check \cref{thm:unitaries} by applying this result to the reduced Choi state $\rho_{ACD}$ of the unitary.
Using the above result, to transfer $A$ from $AC$ to $D$, we shouldn't need any communication and consume $\log \frac{\abs{A_L}}{\abs{A_R}}$ ebits, where we are using the notation of \cref{thm:unitaries}. This is consistent with our result as we can prepare $\ket{\Phi_{A_R D_L}}$ locally.
Thus, we only need to consume $\log\abs{A_L}$ ebits to transmit $A_L$.
However, we can use the $\log \abs{A_R}$ pre-existing ebits to transmit $A_L$ for a net ebit cost of $\log \frac{\abs{A_L}}{\abs{A_R}}$.
No communication was done, so our qubit and bit costs are indeed zero.

In the maximally $I_3$-scrambling case, we can cross-check with \cref{prp:depolarizing}.
In the case where $A$ is the smallest system, \cite{devetak2008exact} states that we should need $\log \abs{A}$ qubits, zero ebits, and zero bits.
This is achieved by the trivial protocol that transfers $A$ to $D$ over a quantum channel, in agreement with our result.

\section{Tripartite information and OTO correlators}
\label{sec:oto}

An important property of the definition of scrambling using the tripartite information is that it can be related to scrambling as measured by out-of-time-order (OTO) correlators, as explained in the introduction.
Specifically, we recall the following formula for the product of average OTO correlators,
\[
\lvert\braket{\mathcal O_C(t) \mathcal O_A \mathcal O_C(t) \mathcal O_A}_{\beta=0}\rvert
\times
\lvert\braket{\mathcal O_D(t) \mathcal O_A \mathcal O_D(t) \mathcal O_A}_{\beta=0}\rvert
\propto 2^{I_3^{(2)}},
\]
where
\begin{equation}
\label{eq:renyi3 defn}
  I_3^{(2)} = S_2(A) + S_2(B) - S_2(AC) - S_2(AD) = \log\abs A + \log\abs B - S_2(AC) - S_2(AD)
\end{equation}
is a R\'enyi-2 version of the tripartite information, defined in terms of the \emph{R\'enyi-2 entropy} $S_2(\rho) = -\log\tr\rho^2$ instead of the von Neumann entropy.
Since $S_2(\rho) \leq S(\rho)$ for any quantum state $\rho$, one obtains that $I_3^{(2)}\geq I_3$.
Thus the `butterfly effect' as measured by small OTO correlators implies $I_3$-scrambling~\cite{hosur2015chaos}.

However, the converse of this statement is not true.
That is, a $I_3$-scrambling bipartite unitary can nevertheless have high OTO correlators.
One example of this is again given by the family of unitaries $U_d$ defined in~\eqref{eq:counterU}, where we find an arbitrarily large gap between $I_3$ and $I_3^{(2)}$.

\begin{prp}
  \label{prp:renyiGap}
  Consider the unitaries $U_d$ defined in~\eqref{eq:counterU} and choose $d_0 \sim \sqrt[4]{d}$.
  Then
  \[ I_3^{(2)}(A;B;C)_{U_d} - I_3(A;B;C)_{U_d} \gtrsim \frac12\log d, \]
  in the limit of large $d$.
\end{prp}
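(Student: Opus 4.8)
The plan is to express the gap directly through the difference between von Neumann and R\'enyi-2 entropies of the relevant marginals and then to lower bound it. Since the reduced states $\rho_A$ and $\rho_B$ of the Choi state are maximally mixed, one has $S(A)=S_2(A)=\log d$ and likewise for $B$, so the single-system contributions to $I_3$ and $I_3^{(2)}$ cancel and
\[ I_3^{(2)} - I_3 = \bigl(S(AC) - S_2(AC)\bigr) + \bigl(S(AD) - S_2(AD)\bigr). \]
Because $S_2(\rho)\le S(\rho)$ for every state, both summands are nonnegative, so it suffices to lower bound the single term $S(AC)-S_2(AC)$ by roughly $\tfrac12\log d$; I would simply discard the $AD$-contribution.

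For the von Neumann part I would invoke \cref{lem:nonRobustMax}, which gives $\norm{\rho_{AC,d}-\tau_{AC}}_1\le 8\,d_0/d$. With $d_0\sim\sqrt[4]{d}$ this trace distance is $O(d^{-3/4})$, so the Fannes--Audenaert inequality~\eqref{eq:fannes audenaert}, applied on the $d^2$-dimensional space $AC$, yields $S(AC)\ge 2\log d - O(d^{-3/4}\log d) = 2\log d - o(1)$; the point is that the tiny trace distance easily beats the $\log(d^2-1)$ prefactor.

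For the R\'enyi-2 part I would expose one large eigenvalue of $\rho_{AC}$ by testing against the unnormalized vector $\ket{v_0}=\sum_{a\ge d_S}\ket a_A\ket a_C$, which is supported on the ``identity'' block $a\ge d_S$. Writing the Choi state vector as $\tfrac1d(\ket{\psi_S}+\ket{\psi_I})$, where $\ket{\psi_S}$ collects the terms $a,b<d_S$ on which $U_d$ acts as $U_S$ and $\ket{\psi_I}=\sum_{(a,b)\notin A_S\otimes B_S}\ket a_A\ket b_B\ket a_C\ket b_D$, a direct contraction shows that $\bra{v_0}$ annihilates $\ket{\psi_S}$ (whose $A$-register lives in $a<d_S$) and produces $\tfrac{d_0}{d}\sum_b\ket b_B\ket b_D$ from $\ket{\psi_I}$. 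Hence $\braket{v_0|\rho_{AC}|v_0} = d_0^2/d$, and normalizing $\phi=\ket{v_0}/\sqrt{d_0}$ gives $\braket{\phi|\rho_{AC}|\phi}=d_0/d$; using $\tr\rho_{AC}^2\ge\braket{\phi|\rho_{AC}|\phi}^2$ (Cauchy--Schwarz) we obtain $\tr\rho_{AC}^2\ge d_0^2/d^2$, i.e.\ $S_2(AC)\le 2\log(d/d_0)$.

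Combining the two bounds yields $S(AC)-S_2(AC)\ge 2\log d_0 - o(1)$, which for $d_0\sim\sqrt[4]{d}$ equals $\tfrac12\log d - o(1)$, establishing the claim. I expect the main obstacle to be the explicit evaluation of $\braket{v_0|\rho_{AC}|v_0}$: one must carefully unwind the Choi state and verify that the cross terms between the scrambling and identity sectors drop out upon contraction with $\ket{v_0}$, so that a clean lower bound on the purity survives without needing the full spectrum of $\rho_{AC}$. Everything else reduces to the already-available estimate of \cref{lem:nonRobustMax} and the Fannes--Audenaert inequality.
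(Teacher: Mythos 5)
Your proof is correct and follows essentially the same route as the paper: the appendix likewise lower-bounds $\tr\rho_{AC,d}^2$ by the contribution of the identity block (your test vector $\ket{v_0}/\sqrt{d_0}=\ket{\Phi^+_{A_0C_0}}$ picks out exactly the dominant block $\rho_{00}=\frac{d_0}{d}\Phi^+_{A_0C_0}$ appearing in \cref{lem:nonRobustMax}), and it obtains $I_3\sim-2\log d$ from the same trace-distance estimate combined with the Fannes--Audenaert inequality. The only cosmetic difference is that the paper uses $S_2(AD)=2\log d$ exactly (since $\rho_{AD,d}$ is maximally mixed) to reduce $I_3^{(2)}=-S_2(AC)$, whereas you simply discard the nonnegative $AD$ term.
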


This is proved by explicit calculation in \cref{app:renyi}, where we find that for sufficiently large $d$,
\begin{equation}
\label{eq:renyi tripartite}
  I_3^{(2)}(A;B;C)_{U_d} \geq -\frac32 \log d.
\end{equation}
On the other hand, $I_3(A;B;C)_{U_d} \sim -2 \log d$ as a consequence of \cref{eq:nonRobustZero state dist,eq:fannes audenaert}.
Together this establishes \cref{prp:renyiGap}.

This large separation can be understood by the fact that we have large individual OTO correlators.
To see this, it is useful to choose bases of local Hermitian operators, $\tr \mathcal O_{D,i} \mathcal O_{D,j} = d \delta_{i,j}$ etc., that are adapted to the scrambling and nonscrambling subspaces.
Indeed, we can write
\[ U_d = U_S \op I_{\bar S}, \]
where $U_S$ is the maximally $I_3$-scrambling unitary acting on $A_SB_S = C_SD_S$ and $I_{\bar S}$ the identity operator on the complement $\overline{C_SD_S} = C_SD_0 \op C_0D_S \op C_0D_0$.
Hence, if $\mathcal O_{D,i}$ is an operator that only acts on $D_0$, it will commute with $U_d$, so that $\mathcal O_{D,i}(t) = \mathcal O_{D,i}$.
In this case, it follows that, for any local operator $\mathcal O_A$ on $A$,
\begin{equation*}
  \braket{\mathcal O_{D,i}(t) \mathcal O_A \mathcal O_{D,i}(t) \mathcal O_A}_{\beta=0}
= \braket{\mathcal O_{D,i} \mathcal O_A \mathcal O_{D,i} \mathcal O_A}_{\beta=0}
= \frac 1 {d} \tr \mathcal O_{D,i}^2 \times \frac 1 {d} \tr \mathcal O_{A,i}^2
= 1.
\end{equation*}
Furthermore, the number of such pairs of maximally correlated operators will be increasing without bound as $d \to \infty$.

\section{Multipartite generalizations}
\label{sec:multipartite}

The main results for the minimal and maximal cases above can be generalized to the multipartite setting.
However, it is not clear, a priori, how to extend the definition of $I_3$-scrambling to the MIMO case.
In the following, we will justify defining $I_3$-scrambling for \emph{multiple input and multiple output (MIMO) unitaries} $U_{A_1\dots A_n \to C_1 \dots C_m}$ using tripartite informations of the form
\begin{equation*}
  -I_3(A_i ; A_i^c ; C_j) = I(A_i ; A_i^c | C_j) = I(C_j ; C_j^c | A_i),
\end{equation*}
where $A_i^c$ is the subset of all input subsystems save for $A_i$ and $C_j^c$ the subset of all output subsystems except for $C_j$ (\cref{fig:MIMO}).
The equalities follow from the bipartite case, \eqref{eq:neg tri is cmi}, if we partition the Choi state of $U$ into the four subsystems $A_i, A_i^c, C_j, C_j^c$.

\subsection*{Minimal scrambling}
We define a \emph{minimally $I_3$-scrambling} MIMO unitary to be a unitary $U_{A_1 \dots A_n \to C_1 \dots C_m}$ such that
\[ I_3(A_i ; A_i^c ; C_j) = 0 \]
for all $i,j$.
Again, we find that such a unitary can be decomposed into a tensor product of local unitaries connecting individual inputs and outputs, generalizing \cref{thm:unitaries}:

\begin{thm}
  \label{thm:mimoUnitaries}
  Let $U_{A_1\dots{}A_n\to C_1\dots{}C_m}$ be a MIMO unitary.
  Then $U$ is minimally $I_3$-scrambling if and only if it is of the form
  \[ U_{A_1\dots{}A_n\to C_1\dots{}C_m} = \bigotimes_{i,j} U_{i\to j} \]
  with respect to decompositions
  $A_i = \bigotimes_{j=1}^m A_{i\to j}$ for $i=1,\dots,n$,
  $C_j = \bigotimes_{i=1}^n C_{i\to j}$ for $j=1,\dots,m$
  and unitaries
  $U_{i\to j} \colon A_{i\to j} \to C_{i\to j}$ for $i,j$.
\end{thm}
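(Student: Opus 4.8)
The plan is to reduce the MIMO statement to the bipartite result, \cref{thm:unitaries}, by applying it iteratively to the pairwise tripartite informations. The key observation is that the hypothesis $I_3(A_i;A_i^c;C_j)=0$ for every pair $(i,j)$ gives us, by the bipartite theorem applied to the four-way partition $A_i, A_i^c, C_j, C_j^c$ of the Choi state, a tensor-product decomposition of $U$ with respect to that partition. The issue is that each such application yields a factorization relative to \emph{one} coarse partition, and I must show that all these factorizations are mutually compatible, so that they refine into the single fine decomposition $A_i=\bigotimes_j A_{i\to j}$, $C_j=\bigotimes_i C_{i\to j}$ claimed in the statement.

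Concretely, the first step is the easy direction: if $U=\bigotimes_{i,j}U_{i\to j}$, then for any fixed $(i,j)$ the Choi state factorizes as a tensor product in which the only maximally entangled pairs straddling the cut $\{A_i,C_j\}\mid\{A_i^c,C_j^c\}$ are those from the $U_{i\to j}$ block; a direct entropy computation (as in the converse half of \cref{thm:unitaries}) then gives $I_3(A_i;A_i^c;C_j)=0$. The substantive direction is the forward one. First I would fix an input $A_i$ and run the bipartite argument across all $j$: for each $j$, \cref{thm:unitaries} applied to the partition $(A_i,A_i^c,C_j,C_j^c)$ splits $A_i$ into a part that flows to $C_j$ and a part that flows to $C_j^c$. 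Running this over $j=1,\dots,m$ and checking that the successive splittings of $A_i$ are compatible yields a decomposition $A_i=\bigotimes_{j=1}^m A_{i\to j}$, with $A_{i\to j}$ the subspace of $A_i$ whose information is routed to output $C_j$. Symmetrically, fixing an output $C_j$ and varying $i$ gives $C_j=\bigotimes_{i=1}^n C_{i\to j}$.

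The main obstacle, and the heart of the proof, is establishing this compatibility: that the bipartite decompositions obtained for different values of $j$ (resp.\ $i$) can be simultaneously realized, so that $A_i$ genuinely factors as $\bigotimes_j A_{i\to j}$ rather than merely admitting a sequence of binary splits that might conflict. I expect this to follow from the structural rigidity of the zero-CMI normal form: each application of \cref{thm:unitaries} tells us that the residual channel $\mathcal N_{A_i\to C_j}$ (with the other inputs set to any fixed product state) is \emph{equivalent to a unitary} on a subfactor $A_{i\to j}$ of $A_i$ and completely independent of $A_i^c$ there, exactly as in the discussion following \cref{thm:unitaries}. Since these subfactors are pinned down intrinsically by where the quantum information goes, and distinct outputs receive disjoint information by unitarity (the total quantum capacity $\log|A_i|$ is partitioned among the outputs), the subspaces $A_{i\to j}$ must be mutually commuting tensor factors. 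A clean way to make this rigorous is induction on $m$: having factored off the $A_{i\to m}\to C_m$ piece for every $i$ via \cref{thm:unitaries}, one is left with a MIMO unitary on the remaining factors $A_1\dots A_n\to C_1\dots C_{m-1}$ that again satisfies all the hypotheses $I_3(A_i;A_i^c;C_j)=0$ for $j<m$, to which the inductive hypothesis applies.

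Finally, once the fine decompositions of the $A_i$ and $C_j$ are in place, I would assemble the global factorization: the normal form from each bipartite application identifies the block sending $A_{i\to j}$ to some factor of $C_j$, and matching up dimensions (as dictated by the formula $|A_{i\to j}|=|C_{i\to j}|=\tfrac12 I(A_i;C_j)$, the multipartite analogue of the dimension count in \cref{thm:unitaries}) forces $U$ to be the tensor product $\bigotimes_{i,j}U_{i\to j}$ of local unitaries $U_{i\to j}\colon A_{i\to j}\to C_{i\to j}$, absorbing residual local unitaries into the factors as in the bipartite proof.
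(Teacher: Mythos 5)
Your overall strategy --- reducing to \cref{thm:unitaries} by repeatedly applying it to the four-way partitions $(A_i,A_i^c,C_j,C_j^c)$ and peeling off local unitaries --- is the same as the paper's, and your easy direction is fine. But the forward direction has a genuine gap exactly at the point you flag as the ``heart of the proof.'' You assert that ``distinct outputs receive disjoint information by unitarity (the total quantum capacity $\log\abs{A_i}$ is partitioned among the outputs),'' and hence that the successive binary splits of $A_i$ exhaust it. This does not follow from unitarity alone: after peeling off $A_{i\to 1},\dots,A_{i\to m}$ one is left with a residual factor $A_i'$ satisfying $I(A_i';C_j')=0$ for every \emph{individual} output, and unitarity only says that $A_i'$ is maximally entangled with the union of the residual outputs --- a priori the information could be stored nonlocally across them, leaving $A_i'$ nontrivial. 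Ruling this out is where the paper does its real work: it proves a dedicated lemma that a minimally $I_3$-scrambling MIMO unitary with $I(A_1;C_j)=0$ for all $j$ must have $A_1$ trivial, via a recursion $S(A_1C_j\dots C_m)-S(C_j\dots C_m)\ge S(A_1C_{j+1}\dots C_m)-S(C_{j+1}\dots C_m)$ that combines the conditions $I_3(A_1;A_1^c;C_j)=0$ with strong subadditivity and terminates in $0\ge 2S(A_1)$. Your sketch contains no substitute for this argument; ``the subfactors are pinned down intrinsically'' is the conclusion, not a proof.

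Two smaller points. First, each peeling step requires checking that the residual operator is still a minimally $I_3$-scrambling MIMO unitary with respect to the original tensor factors (the bipartite normal form splits $A_i^c$ and $C_j^c$ as wholes, not factor-by-factor); the paper verifies this using additivity of entropy under tensor products, and you should do so as well rather than asserting that the hypotheses persist. Second, your proposed induction on $m$ has a degenerate base case: for $m=1$ the conditions $I_3(A_i;A_i^c;C_1)=0$ hold automatically for every unitary (the Choi state is pure on $A_1\dots A_nC_1$ and $\rho_{A_1\dots A_n}$ is maximally mixed), so the inductive hypothesis is vacuous there and cannot deliver the factorization of $C_1$. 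The paper avoids this by peeling input by input and using the triviality lemma above to terminate: once all the $A_i$ are consumed, the residual is a unitary out of a trivial space, which forces the residual output factors to be trivial as well.
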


We will prove \cref{thm:mimoUnitaries} by viewing the MIMO unitary as a bipartite unitary where we group inputs and outputs.
This will then allow us to iteratively apply \cref{thm:unitaries} to decompose the MIMO unitary piece by piece.
We will first peel off all the unitaries for a single input and then repeat for all other inputs.
To do so, we need to show that we can decompose a MIMO unitary into a local unitary and a residual MIMO unitary such that $A_1$ and $C_1$ have zero mutual information on the residual unitary and such that the residual MIMO unitary is still minimally $I_3$-scrambling:

\begin{lem}
\label{lem:multiparty one to one}
  Let $U_{A_1\dots{}A_n\to C_1\dots{}C_m}$ be a minimally $I_3$-scrambling MIMO unitary.
  Then there exist decompositions
  $A_1 = A_{1\to 1} \ot A'_1$,
  $C_1 = C_{1\to 1} \ot C'_1$
  and unitaries
  $U_{1\to 1} \colon A_{1\to 1} \to C_{1\to 1}$,
  $U'_{A'_1A_2\dots{}A_n\to C'_1C_2\dots{}C_m}$
  such that
  \[ U_{A_1\dots{}A_n\to C_1\dots{}C_m} = U_{1\to 1} \ot U'_{A'_1A_2\dots{}A_n\to C'_1C_2\dots{}C_m}. \]
  Here, $U'$ is a minimally $I_3$-scrambling MIMO unitary that satisfies $I(A'_1;C'_1)_{U'} = 0$.
\end{lem}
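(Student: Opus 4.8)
The plan is to reduce to the bipartite \cref{thm:unitaries} by grouping systems. View $U$ as a bipartite unitary $U_{AB\to CD}$ with $A = A_1$, $B = A_1^c = A_2\ot\dots\ot A_n$, $C = C_1$, and $D = C_1^c = C_2\ot\dots\ot C_m$. The MIMO hypothesis, applied to the pair $(i,j)=(1,1)$, gives $I_3(A_1;A_1^c;C_1)=0$, so this grouped bipartite unitary is minimally $I_3$-scrambling. Applying \cref{thm:unitaries} then yields a criss-cross decomposition $U = U_{A_{1L}\to C_{1L}}\ot U_{A_{1R}\to D_L}\ot U_{B_L\to C_{1R}}\ot U_{B_R\to D_R}$ with respect to splittings $A_1 = A_{1L}\ot A_{1R}$, $C_1 = C_{1L}\ot C_{1R}$, $B = B_L\ot B_R$, $D = D_L\ot D_R$. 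I then set $A_{1\to1}:=A_{1L}$, $C_{1\to1}:=C_{1L}$, $U_{1\to1}:=U_{A_{1L}\to C_{1L}}$, define $A'_1:=A_{1R}$ and $C'_1:=C_{1R}$, and collect the remaining three factors into $U':=U_{A_{1R}\to D_L}\ot U_{B_L\to C_{1R}}\ot U_{B_R\to D_R}$, which is a unitary from $A'_1 A_2\dots A_n$ to $C'_1 C_2\dots C_m$ (the finer splittings of $B$ and $D$ being immaterial, as we only treat $U'$ with the regrouped MIMO structure). By construction $U = U_{1\to1}\ot U'$, giving the claimed factorization.

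Two properties of $U'$ then remain to be checked. First, $I(A'_1;C'_1)_{U'}=0$: the converse computation recorded after \cref{thm:unitaries} shows that the reduced Choi state of the grouped unitary is $\rho_{A_1C_1} = \Phi^+_{A_{1L}C_{1L}}\ot\tau_{A_{1R}}\ot\tau_{C_{1R}}$, whose marginal on $A'_1C'_1=A_{1R}C_{1R}$ is the product $\tau_{A'_1}\ot\tau_{C'_1}$; hence $I(A'_1;C'_1)=0$. Equivalently, in the criss-cross picture $A'_1$ is routed to $D_L\subset C_1^c$ while $C'_1$ is fed from $B_L\subset A_1^c$, so the two are uncorrelated.

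Second, and this is the heart of the argument, I must verify that $U'$ is again minimally $I_3$-scrambling, i.e.\ that $I(A_i;A_i^c|C_j)_{U'}=0$ for every input $A_i\in\{A'_1,A_2,\dots,A_n\}$ and every output $C_j\in\{C'_1,C_2,\dots,C_m\}$, with complements now taken within the inputs of $U'$. Because $U = U_{1\to1}\ot U'$, the Choi state factorizes as $\rho_{\mathrm{Choi}}(U) = \Phi^+_{A_{1\to1}C_{1\to1}}\ot\rho_{\mathrm{Choi}}(U')$, and I will use additivity of the conditional mutual information over tensor products: each defining CMI of $U'$ equals the corresponding quantity on $\rho_{\mathrm{Choi}}(U)$ minus the contribution of the Bell pair $\Phi^+_{A_{1\to1}C_{1\to1}}$. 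The key observation is that, across all four index cases ($i=1$ vs.\ $i\ge2$ and $j=1$ vs.\ $j\ge2$), the two halves $A_{1\to1}$ and $C_{1\to1}$ never land on opposite information arguments: under the identifications $A_1=A_{1\to1}A'_1$ and $C_1=C_{1\to1}C'_1$, the half $A_{1\to1}$ always sits inside $A_1$ (in the $A_i$-slot when $i=1$, in the $A_i^c$-slot when $i\ge2$), while $C_{1\to1}$ only ever appears inside the conditioning system $C_1$ (when $j=1$) or is absent. Hence the Bell-pair contribution $I(X_S;Y_S|Z_S)_{\Phi^+}$ always has an empty factor on one of its two information arguments and therefore vanishes, so that $I(A_i;A_i^c|C_j)_{U'}$ equals the matching defining CMI of $U$, which is zero by the minimal-scrambling hypothesis.

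The main obstacle is precisely this last bookkeeping: one must match each $U'$-CMI with the correct defining CMI of $U$ under the regrouping and confirm, case by case, that the Bell half $C_{1\to1}$ never appears opposite $A_{1\to1}$ as an information argument. The underlying facts---additivity of the CMI over tensor products, and the vanishing of $I(X;Y|Z)_{\Phi^+}$ whenever $X$ or $Y$ omits a half of the pair---are elementary, so once the four index cases are laid out explicitly the verification is routine.
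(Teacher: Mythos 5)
Your proof is correct and follows essentially the same route as the paper: group the inputs and outputs to apply \cref{thm:unitaries} with $A=A_1$, $B=A_1^c$, $C=C_1$, $D=C_1^c$, read off $I(A'_1;C'_1)=0$ from the normal form, and deduce that $U'$ remains minimally $I_3$-scrambling from the tensor-product factorization of the Choi state together with additivity of the (conditional) entropies. The paper states this last step in one sentence; your four-case bookkeeping is just a more explicit version of the same argument.
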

\begin{proof}
  We apply \cref{thm:unitaries} with $A = A_1$, $B=A_2\dots A_n$ and $C=C_1$ and $D=C_2\dots C_m$.
  Thus we obtain that
  \begin{equation}
  \label{eq:nofo}
    U_{AB\to CD} = U_{A_L \to C_L} \ot \bigl( U_{A_R \to D_L} \ot U_{B_L \to C_R} \ot U_{B_R\to D_R} \bigr).
  \end{equation}
  If we define $A_{1\to 1} := A_L$, $A'_1 := A_R$, $C_{1\to 1} := C_L$, $C'_1 := C_R$, $U_{1\to 1} := U_{A_L\to C_L}$ and $U'$ as the tensor product of the three unitaries on the right-hand side then we obtain a decomposition as in the statement of the lemma.

  That $U'$ is still minimally $I_3$-scrambling follows from the fact local unitaries $U_{1\to 1}$ and the overall unitary $U$ have zero tripartite information, in addition to the additivity of von Neumann entropy for tensor product states  (cf.~\cite{bao2015holographic,nezami2016multipartite}).
  And the statement about the mutual information holds because $I(A_R ; C_R)_{U'}=0$ by direct inspection of the normal form~\eqref{eq:nofo}.
\end{proof}

By iteratively applying \cref{lem:multiparty one to one}, we find decompositions $A_1=\bigotimes_{j=1}^m A_{1\to j} \ot A'_1$ and $C_j = C_{1\to j} \ot C'_j$ such that $U$ factors into a tensor product
\[ U_{A_1\dots{}A_n\to C_1\dots{}C_m} = \bigotimes_{j=1}^m U_{1\to j} \ot U'_{A'_1A_2\dots{}A_n\to C'_1\dots{}C'_m} \]
of local unitaries $U_{1\to j} \colon A_{1\to j} \to C_{1\to j}$ with a residual unitary $U'$.
The latter is minimally $I_3$-scrambling and moreover satisfies $I(A'_1;C'_j)_{U'}=0$ for all $j$ (using monotonicity of the mutual information).
However, we also need to make sure that this process will consume all of $A_1$.
This is a consequence of the following lemma, applied to the residual unitary $U'$.

\begin{lem}
  Let $U_{A_1\dots{}A_n\to C_1\dots{}C_m}$ be a minimally $I_3$-scrambling MIMO unitary with $I(A_1;C_j)=0$ for $j=1,\dots,m$.
  Then the system $A_1$ is trivial.
\end{lem}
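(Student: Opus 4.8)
The plan is to sidestep the normal form entirely and argue purely information-theoretically, combining the minimal-scrambling hypothesis with the given condition $I(A_1;C_j)=0$ to show that $A_1$ is uncorrelated with the \emph{entire} output, which by unitarity will force it to be trivial.

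First, I would pin down the correlation between $A_1$ and the full output $C=C_1\dots C_m$ using unitarity. Since the Choi state is pure and $\rho_{A_1\dots A_n}$ is maximally mixed, both $\rho_{A_1}$ and the reduced state on the complement $A_1^c$ of $A_1C$ are maximally mixed, so that $S(A_1)=\log\abs{A_1}$, $S(C)=\log\abs A$, and $S(A_1C)=S(A_1^c)=\log\abs{A_1^c}$. Hence $I(A_1;C)=2\log\abs{A_1}$, and it suffices to prove $I(A_1;C)=0$.

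Second --- and this is the one genuinely non-obvious choice, which I expect to be the crux --- I would invoke minimal scrambling in its \emph{dual} form rather than the obvious one. By the identity $-I_3(A_1;A_1^c;C_j)=I(C_j;C_j^c\vert A_1)$ recorded at the start of this section, minimal scrambling gives $I(C_j;C_j^c\vert A_1)=0$ for every $j$. Combining this with the hypothesis via the chain rule, $I(C_j;A_1C_j^c)=I(C_j;A_1)+I(C_j;C_j^c\vert A_1)=0$, so each output $C_j$ is in product with $A_1$ together with \emph{all} the other outputs: $\rho_{A_1C_1\dots C_m}=\rho_{C_j}\ot\rho_{A_1C_j^c}$. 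The subtlety is that conditioning on $C_j$ instead (i.e. using $I(A_1;A_1^c\vert C_j)=0$ directly) does not combine with the hypothesis, and the pairwise conditions $I(A_1;C_j)=0$ alone are insufficient to control $I(A_1;C)$; conditioning on $A_1$ is what makes the two vanishing quantities add up.

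Finally, I would peel the outputs off one at a time. The relation $\rho_{A_1C_1\dots C_m}=\rho_{C_j}\ot\rho_{A_1C_j^c}$ for $j=1$ splits $C_1$ off as a tensor factor; tracing out $C_1$ and applying the $j=2$ relation splits off $C_2$; iterating yields $\rho_{A_1C_1\dots C_m}=\rho_{A_1}\ot\bigotimes_j\rho_{C_j}$, so $I(A_1;C)=0$. Equivalently, expanding $I(A_1;C)$ by the chain rule and bounding each term $I(A_1;C_j\vert C_1\dots C_{j-1})\le I(A_1C_j^c;C_j)=0$ (by the chain rule and monotonicity) gives the same conclusion without an explicit factorization. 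Together with $I(A_1;C)=2\log\abs{A_1}$ from the first step, this forces $\log\abs{A_1}=0$, i.e. $A_1$ is trivial.
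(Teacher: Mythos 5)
Your proof is correct, and at its core it is the same entropic argument as the paper's, repackaged in a way that makes the mechanism more transparent. The paper starts from the direct form $I(A_1;A_1^c\vert C_j)=0$, uses purity of the Choi state together with maximal mixedness of $\rho_A$ and $\rho_C$ to rewrite it (given the hypothesis) as $0=S(A_1)+S(A_1C_j^c)-S(C_j^c)$, and then telescopes over $j$ with an explicit application of strong subadditivity to arrive at $0\geq 2S(A_1)$. Your key intermediate fact, $I(C_j;A_1C_j^c)=I(C_j;A_1)+I(C_j;C_j^c\vert A_1)=0$, is --- after the same purity and maximal-mixedness identities --- literally that same equation, and your telescoping bound $I(A_1;C_j\vert C_1\dots C_{j-1})\leq I(A_1C_1\dots C_{j-1};C_j)\leq I(A_1C_j^c;C_j)=0$ is strong subadditivity in the guise of monotonicity of the mutual information. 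What your route buys is conceptual clarity: deriving the key fact from the dual identity $-I_3(A_1;A_1^c;C_j)=I(C_j;C_j^c\vert A_1)$ (which the paper records at the start of the section) plus the chain rule explains \emph{why} the hypothesis $I(A_1;C_j)=0$ combines with minimal scrambling, whereas the paper's entropy bookkeeping hides this; and computing $I(A_1;C)=2\log\abs{A_1}$ from unitarity up front states cleanly what the inequality chain must deliver. One small point of care: after splitting off $C_1$, the $j=2$ product relation applies to the reduced state $\rho_{A_1C_2\dots C_m}$ only after discarding $C_1$ from $I(C_2;A_1C_2^c)=0$ by monotonicity; your chain-rule formulation handles this correctly, but the ``iterate the factorization'' phrasing should make that step explicit.
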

\begin{proof}
  First note that, for all $j=1,\dots,m$,
  \begin{equation}
  \label{eq:important}
  \begin{aligned}
    0 &= I_3(A_1;A_1^c;C_j) = I(A_1;A_1^c|C_j) \\
    &= S(A_1C_j) + S(A_1^cC_j) - S(C_j) - S(A_1A_1^cC_j) \\
    &= S(A_1C_j) + S(A_1C_j^c) - S(C_j) - S(C_j^c) \\
    &= S(A_1) + S(A_1C_j^c) - S(C_j^c),
  \end{aligned}
  \end{equation}
  where the last equality follows from the assumption that $I(A_1;C_j)=0$.
  This implies the following recursion formula:
  \begin{align*}
    &\quad S(A_1C_j\dots{}C_m) - S(C_j\dots{}C_m) \\
    &= S(A_1C_j\dots{}C_m) + S(A_1C_j^c) - S(C_j\dots{}C_m) - S(C_j^c) + S(A_1) \\
    &\geq S(A_1C_1\dots{}C_m) + S(A_1C_{j+1}\dots{}C_m) - S(C_j\dots{}C_m) - S(C_j^c) + S(A_1) \\
    &= S(A_1C_{j+1}\dots{}C_m) - S(C_{j+1}\dots{}C_m) - S(C_j) - S(C_j^c) + S(A_1) + S(A_1^c) \\
    &= S(A_1C_{j+1}\dots{}C_m) - S(C_{j+1}\dots{}C_m)
  \end{align*}
  The first equality holds by plugging in~\eqref{eq:important}, the inequality is strong subadditivity, and the last two follow by using that the reduced state $\rho_{C_1\dots{}C_m}$ is maximally mixed by unitarity.
  If we start with~\eqref{eq:important} for $j=1$ and successively apply the recursion formula, we obtain
  \begin{align*}
  0 &= S(A_1) + S(A_1C_2\dots{}C_m) - S(C_2\dots{}C_m) \\
  &\geq S(A_1) + S(A_1C_3\dots{}C_m) - S(C_3\dots{}C_m) \\
  &\geq \dots \geq 2 S(A_1).
  \end{align*}
  We conclude that $\log\abs{A_1} = S(A_1) = 0$.
\end{proof}

The above considerations thus allow us to completely peel off $A_1$ from the MIMO unitary, leaving a minimally $I_3$-scrambling MIMO unitary on the other inputs.
We have thus proved the following lemma:

\begin{lem}
\label{lem:multipartite one to many}
  Let $U_{A_1\dots{}A_n\to C_1\dots{}C_m}$ be a minimally $I_3$-scrambling MIMO unitary.
  Then there exist decompositions
  $A_1 = \bigotimes_{j=1}^m A_{1\to j}$ and $C_j = C_{1\to j} \ot C'_j$ for $j=1,\dots,m$,
  as well as unitaries $U_{1\to j} \colon A_{1\to j} \to C_{1\to j}$ for $j=1,\dots,m$ and $U'_{A_2\dots{}A_n\to C'_1\dots{}C'_m}$,
  such that
  \[ U_{A_1\dots{}A_n\to C_1\dots{}C_m} = \bigotimes_{j=1}^n U_{1\to j} \ot U'_{A_2\dots{}A_n\to C'_1\dots{}C'_m}. \]
  Moreover, $U'$ is again a minimally $I_3$-scrambling MIMO unitary.
\end{lem}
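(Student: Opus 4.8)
The plan is to assemble this lemma from the two results that immediately precede it: \cref{lem:multiparty one to one}, which peels a single input--output pair off a minimally $I_3$-scrambling MIMO unitary, and the triviality lemma above, which shows that an input decoupled from every output must be trivial. Concretely, I would iterate \cref{lem:multiparty one to one} once for each output, splitting off from $A_1$ the portion routed to $C_j$, and then invoke the triviality lemma to argue that nothing of $A_1$ is left over.

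First I would apply \cref{lem:multiparty one to one} with designated output $C_1$, obtaining $A_1 = A_{1\to 1}\ot A'_1$, $C_1 = C_{1\to 1}\ot C'_1$, a local unitary $U_{1\to 1}\colon A_{1\to 1}\to C_{1\to 1}$, and a residual minimally $I_3$-scrambling MIMO unitary $U^{(1)}$ with $I(A'_1;C'_1)_{U^{(1)}}=0$. By relabeling outputs, the same lemma applies to $U^{(1)}$ with designated output $C_2$, peeling off $A_{1\to 2}\to C_{1\to 2}$ and leaving a residual $U^{(2)}$ with $I(A''_1;C''_2)_{U^{(2)}}=0$, where $A'_1 = A_{1\to 2}\ot A''_1$. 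The key bookkeeping point is that this step is an exact tensor-product factorization, so $C'_1$ sits entirely in the residual factor and is uncorrelated with the freshly removed $A_{1\to 2}$; hence $I(A''_1;C'_1)$ stays zero as well, by monotonicity of the mutual information under discarding $A_{1\to 2}$. Repeating for all $m$ outputs yields a decomposition $A_1 = \bigotimes_{j=1}^m A_{1\to j}\ot A'_1$, $C_j = C_{1\to j}\ot C'_j$, local unitaries $U_{1\to j}\colon A_{1\to j}\to C_{1\to j}$, and a residual unitary $U'$ satisfying $I(A'_1;C'_j)_{U'}=0$ simultaneously for every $j$, with $U'$ still minimally $I_3$-scrambling since each application of \cref{lem:multiparty one to one} preserves that property.

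To close the argument I would apply the triviality lemma above to $U'$, viewing $A'_1$ as its first input: since $A'_1$ is decoupled from all outputs $C'_j$, the lemma forces $A'_1$ to be trivial. Therefore $A_1 = \bigotimes_{j=1}^m A_{1\to j}$ decomposes completely and
\[ U_{A_1\dots{}A_n\to C_1\dots{}C_m} = \bigotimes_{j=1}^m U_{1\to j}\ot U'_{A_2\dots{}A_n\to C'_1\dots{}C'_m}, \]
as claimed, with $U'$ minimally $I_3$-scrambling. The main obstacle is not the assembly itself but verifying that the decoupling conditions $I(A'_1;C'_j)=0$ genuinely accumulate across the iteration, so that the hypothesis of the triviality lemma is actually met for all $j$ at once; the heart of the matter---that these conditions then force $A'_1$ to vanish---is supplied by the strong-subadditivity recursion in that lemma, which I would invoke as a black box here.
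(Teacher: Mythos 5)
Your proposal is correct and follows essentially the same route as the paper: iterate \cref{lem:multiparty one to one} over the outputs, use monotonicity of the mutual information to see that the decoupling conditions $I(A'_1;C'_j)=0$ persist through later peeling steps, and then invoke the preceding triviality lemma to conclude that the leftover piece $A'_1$ vanishes. No gaps.
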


\cref{thm:mimoUnitaries} now follows by applying \cref{lem:multipartite one to many} inductively to $A_1$, $A_2$, etc.
After $n$ steps, there are no $A$-systems left.
Since the residual operator $U'$ is a unitary, the corresponding $C'_j$ likewise have to be trivial.
We thus obtain the desired normal form.
To see that, conversely, any MIMO unitary of the given normal form is minimally $I_3$-scrambling follows directly from the corresponding statement in \cref{thm:unitaries}, applied to the bipartitions $A_i, A_i^c$ and $C_j, C_j^c$.
This concludes the proof of \cref{thm:mimoUnitaries}.

\subsection*{Maximal scrambling}
On the other end, we define a \emph{maximally $I_3$-scrambling} MIMO unitary as one that satisfies
\begin{equation*}
  I_3(A_i;A_i^c;C_j) = -2 \log \min\{\abs{A_i},\abs{A_i^c}, \abs{C_j}, \abs{C_j^c}\}
\end{equation*}
for all $i,j$.
Applying \cref{prp:depolarizing} to the bipartition $A_i , A_i^c, C_j, C_j^c$, we conclude that the residual channels $\mathcal N_{A_i\to C_j}$ are completely depolarizing whenever $A_i$ or $C_j$ is the smallest system (e.g., if all systems have the same dimension, as in a typical many-body scenario).
We note that if the average OTO correlators between $A_i,C_j$ and $A_i,C_j^c$ are minimal for each $i$ and $j$, then the MIMO unitary is maximally $I_3$-scrambling.

By an explicit construction similar to that of \cref{eq:odd scrambling}, we can establish that maximally $I_3$-scrambling MIMO unitaries exist for arbitrarily large values of $d$.

\begin{prp}
\label{prp:mimoex}
  Let $A_1=\dots=A_n=C_1=\dots=C_n=\C^d$, where $d>n+1$ is a prime.
  Let $M_n$ be the following $n \times n$ matrix,
  \begin{equation}
  \label{eq:mimoex}
  M_n = I_n + E_n = \begin{bmatrix}
    2 & 1 & \dots & 1 \\
    1 & 2 & \dots & 1 \\
      &   & \ddots \\
    1 & 1 & \dots & 2
  \end{bmatrix},
  \end{equation}
  where $I_n$ is the identity matrix and $E_n$ the matrix of ones.
  Then $U_{d,n} \ket{\vec x} = \ket{M_n \vec x}$ defines a maximally $I_3$-scrambling MIMO unitary.
  Here we write $\ket{\vec x} = \ket{x_1}\dots\ket{x_n}$, and all arithmetic is modulo $d$.
\end{prp}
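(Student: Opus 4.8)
The plan is to verify two things: that $U_{d,n}$ is unitary, and that it is maximally $I_3$-scrambling for every pair $(i,j)$. Unitarity is the easy part, since $U_{d,n}$ permutes the computational basis as soon as $M_n$ is invertible over $\ZZ_d$. As $M_n = I_n + E_n$ is a rank-one perturbation of the identity, its eigenvalues are $n+1$ (on the all-ones vector) and $1$ (with multiplicity $n-1$), so $\det M_n = n+1$. Because $d$ is prime and $0 < n+1 < d$, we have $\det M_n \not\equiv 0 \pmod d$, hence $M_n$ is invertible and $U_{d,n}$ is a permutation unitary. The same Sherman--Morrison computation gives the explicit inverse $M_n^{-1} = I_n - \tfrac{1}{n+1} E_n$ over $\ZZ_d$, which is the workhorse below.

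For scrambling, I first reduce to a statement about two mutual informations. For $n\ge 2$, the subsystems $A_i$ and $C_j$ have the smallest dimension $d$ in the partition $(A_i, A_i^c, C_j, C_j^c)$, so maximal scrambling means $I_3(A_i;A_i^c;C_j) = -2\log d$. Applying the chain of identities in~\eqref{eq:max i3 lower bound} with $A=A_i$, $B=A_i^c$, $C=C_j$, $D=C_j^c$ gives $I_3 = I(A_i;C_j) + I(A_i;C_j^c) - 2\log d$, and since mutual information is nonnegative, the claim is equivalent to $I(A_i;C_j)=0$ and $I(A_i;C_j^c)=0$ for all $i,j$. This is what I would prove.

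The key tool is that the Choi state is the uniform superposition $\tfrac{1}{d^{n/2}}\sum_{\vec x}\ket{\vec x}_A\ket{M_n\vec x}_C$ over the ``graph'' subspace $V = \{(\vec x, M_n\vec x) : \vec x\in\ZZ_d^n\}\subseteq \ZZ_d^{2n}$, identifying registers $1,\dots,n$ with $A_1,\dots,A_n$ and $n+1,\dots,2n$ with $C_1,\dots,C_n$. For such a subspace state one shows, by computing the reduced density matrix and checking it is flat, that for any set of registers $O$ the reduced state is maximally mixed on its support with
\begin{equation*}
  S(O) = \log\abs{V} - \log\abs{V\cap\ZZ_d^{O}} - \log\abs{V\cap\ZZ_d^{O^c}},
\end{equation*}
where $\ZZ_d^O$ denotes the vectors supported only on the registers in $O$, and $\abs V = d^n$. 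All the required entropies thus reduce to counting the dimensions of such intersections.

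The remaining work is a short intersection count, where the $S_n$-symmetry of $M_n$ under simultaneous permutation of rows and columns reduces everything to the two cases $i=j$ and $i\neq j$. A vector $(\vec x, M_n\vec x)\in V$ lies in $\ZZ_d^O$ precisely when the entries of $\vec x$ and $M_n\vec x$ outside $O$ vanish, which becomes a system of linear constraints on $\vec x$; the crucial observation is that imposing $(M_n\vec x)_k=0$ for all $k\neq j$ forces $M_n\vec x = t\,e_j$ for some scalar $t$, hence $\vec x = t\, M_n^{-1}e_j$, so whether an additional constraint such as $x_i=0$ kills the remaining line comes down to whether $(M_n^{-1})_{ij} = \delta_{ij} - \tfrac{1}{n+1}$ is nonzero in $\ZZ_d$. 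This is precisely where I expect the main obstacle, and it is resolved arithmetically: the entry equals $\tfrac{n}{n+1}$ when $i=j$ and $-\tfrac{1}{n+1}$ when $i\neq j$, both nonzero modulo $d$ because $0<n<n+1<d$ with $d$ prime. Carrying out the counts yields $S(A_i)=S(C_j)=\log d$, $S(A_iC_j)=2\log d$ (so $I(A_i;C_j)=0$), and $S(C_j^c)=(n-1)\log d$ together with $S(A_iC_j^c)=n\log d$ (so $I(A_i;C_j^c)=0$), which completes the proof.
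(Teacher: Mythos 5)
Your proof is correct, and it reaches the paper's conclusion by a recognizably different middle layer. The paper works directly with the Choi state: it traces out $A_i$ and $C_j^c$, tracks which cross terms $\ket{\vec x}\bra{\vec y}$ survive, and shows $\rho_{A_i^cC_j}$ and $\rho_{A_iC_j}$ are exactly maximally mixed; the arithmetic inputs it isolates are (1) invertibility of $M$, (2) invertibility after replacing any row by an elementary row, and (3) invertibility of all entries, with (2) checked by cofactor expansion and row reduction on $N_n$. You instead invoke the flat-spectrum entropy formula for uniform superpositions over a subgroup $V\subseteq\ZZ_d^{2n}$ and reduce everything to counting $V\cap\ZZ_d^O$, after first reducing maximal scrambling to $I(A_i;C_j)=I(A_i;C_j^c)=0$ via \eqref{eq:max i3 lower bound} (valid for $n\ge 2$; $n=1$ is degenerate and trivially maximal). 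The two routes need the same arithmetic: your requirement $(M_n^{-1})_{ij}\neq 0$ is exactly the paper's condition (2), since $(M^{-1})_{ij}$ is the cofactor $C_{ji}/\det M$, and your Sherman--Morrison formula $M_n^{-1}=I_n-\tfrac{1}{n+1}E_n$ verifies it more cleanly than the paper's determinant manipulations. What your approach buys is modularity (the subgroup-state entropy formula handles all the reduced states at once and makes the flatness manifest) at the cost of importing that formula as a lemma --- it is standard for stabilizer/graph states and your sketch of its proof is adequate, but you should state and verify it explicitly if this were written out in full, since the paper's self-contained partial-trace computation avoids it entirely. One small bookkeeping point: also confirm, as you implicitly do, that the two constraints defining $V\cap\ZZ_d^{(A_iC_j)^c}$ are independent (the $j$-th row of $M_n$ is never proportional to $e_i^T$ because all its entries are nonzero), which is what gives $S(A_iC_j)=2\log d$ rather than $\log d$.
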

We prove this by showing that the following three criteria on a matrix $M$ are together sufficient to ensure that $U_M \ket{\vec x} = \ket{M \vec x}$ is maximally $I_3$-scrambling:
\begin{enumerate}
\item $M$ is an invertible matrix modulo $d$.
\item If we replace any row of $M$ by any elementary row (i.e., a row with all 0's except for a single entry occupied by a 1) then the resulting matrix is still invertible modulo $d$.
\item All entries of $M$ are invertible modulo $d$.
\end{enumerate}
We then show that $M_n$ defined in~\eqref{eq:mimoex} satisfies these conditions when $d > n+1$ and is prime. The detailed proof is given in \cref{app:mimo}. It is an interesting open question to determine sufficient and necessary conditions on the dimensions for maximally $I_3$-scrambling MIMO unitaries to exist~\cite{goyeneche2015absolutely}.

\acknowledgments
We thank Sepehr Nezami and Beni Yoshida for helpful discussions.
The authors gratefully acknowledge support from the Simons Foundation, including the It from Qubit Collaboration, as well as CIFAR and the Air Force Office of Scientific Research. DD acknowledges funding by a Stanford Graduate Fellowship.

\appendix
\section{Nonrobustness in the approximately minimal case}\label{app:minimal}

In this appendix we prove \cref{lem:nonRobustZero}, restated here for convenience:

\lemmaNonRobustZero{nonRobustZero}{}{\tag{\ref{eq:nonRobustZero state dist}}}{\tag{\ref{eq:nonRobustZero diamond dist}}}
\begin{proof}
  Recall that $U_d$ is given by
  \begin{equation}
  \label{eq:counterUrepeated}
  U_d \ket a \ket b = \begin{cases}
    U_S \ket{a} \ket{b} & 0\leq a, b < d_S\\
    \ket{a} \ket{b} & \text{otherwise}
  \end{cases}.
  \end{equation}
  We prove the first statement.
  The Choi state $\rho_d=\ket{U_d}\!\!\bra{U_d}$ of $U_d$ is given by
  \begin{align}
  	\ket{U_d}
  &= \frac 1d U_{A'B'\to CD,d} \left( \sum_{a,b< d_S} + \sum_{a\geq d_S \lor b\geq d_S} \right) \ket{aa}_{AA'} \ket{bb}_{BB'} \nonumber \\
  &= \frac {d_S} d \ket{U_S}_{A_SB_SC_SD_S} + \frac 1d \sum_{a\geq d_S \lor b\geq d_S} \ket{aa}_{AC} \ket{bb}_{BD} \label{eq:counterchoi}
  \end{align}
  where we write $\ket{U_S}$ for the Choi state of the maximally $I_3$-scrambling unitary $U_S$.
  On the other hand,
  \[
    \ket{\Phi^+_{AC}} \ket{\Phi^+_{BD}}
  = \frac {d_S} d \ket{\Phi^+_{A_SC_S}} \ket{\Phi^+_{B_SD_S}} + \frac 1d \sum_{a\geq d_S \lor b\geq d_S} \ket{aa}_{AC} \ket{bb}_{BD}.
  \]
  Hence, for small $d_S$, the overlap between the two Choi states is given by
  \[
    \abs{\braket{\Phi^+_{AC} \ot \Phi^+_{BD} | U_d}}
  = \Bigl\lvert \frac {d_S^2} {d^2} \braket{\Phi^+_{A_SC_S} \ot \Phi^+_{B_SD_S} | U_S} + \frac {d^2 - d_S^2} {d^2} \Bigr\rvert
  \geq 1 - 2\frac {d_S^2} {d^2}
  \]
  Using the relationship between trace distance and overlap of pure states~\cite{wilde2013quantum},
  \begin{align*}
    \norm{\rho_{ABCD,d} - \Phi^+_{AC} \ot \Phi^+_{BD}}_1
  = 2 \sqrt{1 - \abs{\braket{\Phi^+_{AC} \ot \Phi^+_{BD} | U_d}}^2}
  \leq 4\frac {d_S} d.
  \end{align*}
  We have thus established~\eqref{eq:nonRobustZero state dist}.

	We now prove the second statement.
	Let $U_0$ be a minimally $I_3$-scrambling unitary.
	By Theorem \ref{thm:unitaries}, we can write
	\begin{equation}
	\label{eq:nofo mini3}
	  U_0 = U_{A_L \to C_L} \ot U_{A_R \to D_L} \ot U_{B_L \to C_R} \ot U_{B_R \to D_R},
	\end{equation}
	where $A = A_L \ot A_R$ and similarly for $B,C,D$.
  Without loss of generality, $\abs{C_R} \ge \abs{C}^{1/2}$. Otherwise, switch the roles of $A,B$ in the following.
  We consider a state of the form
  \begin{equation*}
    \sigma_{AB} = \sigma_{A_S} \ot \tau_B,
  \end{equation*}
  where $\sigma_{A_S}$ is an arbitrary state on $A_S \subseteq A = A_L \ot A_R$.
  We will show that $U_d$ and $U_0$ lead to reduced density matrices on $C$ with markedly different entropies, implying that $U_d,U_0$ are well-distinguishable.
  It is clear from~\eqref{eq:counterUrepeated} and the form of $\sigma_{AB}$ that $\sigma_C = \tr_D[U_d \sigma_{AB} U_d^\dagger]$ is supported on the subspace $C_S$, hence
  \[ S(\sigma_C) \leq \log d_S. \]
  On the other hand, using~\eqref{eq:nofo mini3} we can compute the second reduced state as
  \begin{align*}
    \sigma'_C
  &= \tr_D[U_0 \sigma_{AB} U_0^\dagger]
  = \bigl( U_{A_L \to C_L} \ot U_{B_L \to C_R}\bigr) \tr_{A_RB_R}[\sigma_{AB}] \bigl( U_{A_L \to C_L} \ot U_{B_L \to C_R} \bigr)^\dagger \\
  &= U_{A_L \to C_L} \tr_{A_R}[\sigma_{A_S}] U_{A_L \to C_L} \ot \tau_{C_R},
  \end{align*}
  and hence that
  \[ S(\sigma'_C) \geq \log C_R \geq \frac12 \log d. \]
  Thus, using the Fannes-Audenaert inequality~\eqref{eq:fannes audenaert},
	\begin{equation*}
	  \frac12 \log d - \log d_S \leq \abs{S(\sigma_C) - S(\sigma'_C)} \leq \frac12 \lVert \sigma_C - \sigma'_C \rVert_1 \log d + 1,
	\end{equation*}
	from which it follows that
	\begin{equation*}
	  \lVert \sigma_C - \sigma'_C \rVert_1 \geq 1 - \frac {2 + 2 \log d_S} {\log d}.
	\end{equation*}
	Hence, we can bound the trace distance between the output states using monotonicity, which in turn bounds the diamond norm~\eqref{eq:diamond}:
	\[ \lVert U - U_0 \rVert_\diamond
	\geq \lVert U_d \sigma_{AB} U_d^\dagger - U_0 \sigma_{AB} U_0^\dagger \rVert_1
	\geq \lVert \sigma_C - \sigma'_C \rVert_1
	\geq 1 - \frac {2 + 2\log d_S} {\log d}.
  \]
  This establishes~\eqref{eq:nonRobustZero diamond dist}.
\end{proof}

\section{Nonrobustness in the approximately maximal case}\label{app:maximal}

In this appendix we prove \cref{lem:nonRobustMax}, restated again for convenience.

\lemmaNonRobustMax{nonRobustMax}{}
\begin{proof}
  %

  We start with the formula in~\eqref{eq:counterchoi} for the Choi state of $U_d$, which can be written as
  \begin{align*}
    \ket{U_d} 
    &= \frac {d_S}d \ket{U_S}_{A_SB_SC_SD_S}
    + \frac {d_0}d \ket{\Phi^+_{A_0C_0}} \ot \ket{\Phi^+_{B_0D_0}} \\
    &+ \frac {\sqrt{d_S d_0}}d \ket{\Phi^+_{A_SC_S}} \ot \ket{\Phi^+_{B_0D_0}}
    + \frac {\sqrt{d_S d_0}}d \ket{\Phi^+_{A_0C_0}} \ot \ket{\Phi^+_{B_SD_S}}
  \end{align*}
  where $A=A_S\op A_0$ etc.\, with $A_S$, $B_S$, etc.\ the $d_S$-dimensional subspaces on which the maximally $I_3$-scrambling unitary $U_S$ acts, and $\ket{U_S}$ the Choi state of the latter.

  We first compute the reduced density matrix $\rho_{AD,d}$.
  There are no cross-terms, hence
  \[ \rho_{AD,d}
  = \frac {d_S^2} {d^2} \tau_{A_SD_S}
  + \frac {d_0^2} {d^2} \tau_{A_0D_0}
  + \frac {d_0 d_S} {d^2} \tau_{A_SD_0}
  + \frac {d_0 d_S} {d^2} \tau_{A_0D_S}
  = \tau_{AD} \]
  as desired. Here, we have used that $U_S$ is maximally $I_3$-scrambling and hence its reduced state on $A_SD_S$ is maximally mixed.

  We now compute the reduced density matrix $\rho_{AC,d}$.
  For this, we split the matrix into blocks according to the decomposition $AC = A_SC_S \op A_0C_0 \op A_SC_0 \op A_0C_S$.
  Then there are four nonzero blocks,
  \begin{equation}
  \label{eq:AC blocks}
	  \rho_{AC,d}=\left[
	    \begin{array}{c|c|c|c}
	      \rho_{SS}  & \rho_{S0} & 0 & 0 \\ \hline
	      \rho_{S0}^\dagger & \rho_{00} & 0 & 0 \\ \hline
	      0 & 0 & 0 & 0 \\ \hline
	      0 & 0 & 0 & 0
	    \end{array}\right],
  \end{equation}
  where
  \begin{equation}
  \label{eq:AC pieces}
  \begin{aligned}
    \rho_{SS} &= \frac {d_S^2} {d^2} \tau_{A_SC_S} + \frac {d_0d_S} {d^2} \Phi^+_{A_SC_S}, \\
    \rho_{00} &= \frac {d_0d_S} {d^2} \Phi^+_{A_0C_0} + \frac {d_0^2} {d^2} \Phi^+_{A_0C_0} = \frac {d_0} d \Phi^+_{A_0C_0}, \\
    \rho_{S0} &= \frac {\sqrt{d_0d_S}} {d^2} \ket{\Psi_{A_SC_S}}\!\!\bra{\Phi^+_{A_0C_0}},
  \end{aligned}
  \end{equation}
  where we have introduced
  \begin{align*}
    \ket{\Psi_{A_SC_S}} &= \ket{\theta_{A_SC_S}} + d_0 \ket{\Phi^+_{A_SC_S}},
  \end{align*}
  where $\ket{\theta_{A_SC_S}} = d_S \braket{\Phi^+_{B_SD_S} | U_{S,A_SB_SC_SD_S}}$.
  This is a unit vector:
  \[ \braket{\theta|\theta} = d_S^2 \tr \Phi^+_{B_SD_S} \tau_{B_SD_S} = 1, \]
  since $U_S$ is maximally $I_3$-scrambling and so its Choi state on $B_SD_S$ is maximally mixed.
  It follows that
  \[ \lVert \rho_{S0} \rVert_1
  = \tr \sqrt{\rho_{S0} \rho_{S0}^\dagger}
  = \frac {\sqrt{d_0d_S}} {d^2} \lVert \Psi_{A_SC_S} \rVert
  \leq \frac {\sqrt{d_0d_S}} {d^2} (1+d_0)
  \leq 2 \frac {d_0} d
  \]
  Therefore, using $\tau_{AC} = \frac {d_S^2} {d^2} \tau_{A_SC_S} + \frac {d^2-d_S^2} {d^2} \tau'$, where $\tau'$ is a maximally mixed state on the complement of $A_SC_S$,
  \begin{align*}
    \lVert \rho_{AC,d} - \tau_{AC} \rVert_1
    &= \lVert \frac {d_0d_S} {d^2} \Phi^+_{A_SC_S} + \rho_{00} + \rho_{S0} + \rho^\dagger_{S0} - \frac {d^2-d_S^2} {d^2}\tau' \rVert_1 \\
    &\leq \frac {d_0d_S} {d^2} + \frac {d_0} d + 4 \frac {d_0} d + \frac {d^2-d_S^2} {d^2}
    \leq 8 \frac {d_0}d.
  \end{align*}

	At last, we show that the residual channel $\mathcal N_{A\to C}$ for $U_d$ is bounded away from the completely depolarizing channel $\Delta_{A\to C}$ in the diamond norm.
	For this, it suffices to compare their action on a state orthogonal to the scrambling subspace $A_S$, so that $\mathcal N_{A\to C}$ acts by the identity.
	The $d$-th computational basis state $\ket{d-1}$ is such a state:
	\[ \bigl\lVert \mathcal N_{A\to C} - \Delta_{A\to C} \bigr\rVert_\diamond \geq \bigl\lVert \ket{d-1}\!\!\bra{d-1}_C - \tau_C \bigr\rVert_1 = 2 - \frac2d. \qedhere \]
\end{proof}

\section{Calculation of the R\'enyi-2 tripartite information}\label{app:renyi}

In this appendix we verify~\eqref{eq:renyi tripartite}, the lower bound for the R\'enyi-2 tripartite information of the unitary $U_d$ defined in~\eqref{eq:counterU}.
Let $\rho_{ABCD,d}$ denote its Choi state.
In \cref{lem:nonRobustMax}, we have shown that $\rho_{AD,d}$ is maximally mixed.
Hence the R\'enyi-2 tripartite information~\eqref{eq:renyi3 defn} reduces to
\[ I_3^{(2)} = -S_2(AC) = \log \tr \rho_{AC,d}^2. \]
Now, it follows from~\eqref{eq:AC blocks} that
\[
  \rho_{AC,d}^2 =
	\left[\begin{array}{c|c}
    \rho_{SS}^2 + \rho_{S0}\rho_{S0}^\dagger & * \\ \hline
    * & \rho_{00}^2 + \rho_{S0}^\dagger\rho_{S0}
  \end{array}\right],
\]
where we omitted zero rows and did not specify the off-diagonal blocks, which are irrelevant to our calculation.
Using~\eqref{eq:AC pieces}, we find
\begin{align*}
  \tr \rho_{AC,d}^2
\geq \tr \rho_{SS}^2 + \tr \rho_{00}^2
= \frac {d_S^2} {d^4} + 2 \frac {d_0d_S}{d^4} + \frac {d_0^2 d_S^2} {d^4} + \frac {d_0^2} {d^2}
\geq \frac {d_0^2 d_S^2} {d^4}
= \frac 1 {d^2} \frac {d_S^2} {d^2} d_0^2.
\end{align*}
Hence, if we choose $d_0 \sim \sqrt[4]d$ then $\log d_0 \sim \frac14\log d$, thus
\begin{align*}
  I_3^{(2)}
  &= \log \tr \rho_{AC,d}^2 \geq - 2 \log d + 2 \log (1 - \frac {d_0} d) + \log d_0^2 \gtrsim - \frac32 \log d.
\end{align*}

\section{Maximal scrambling and typical inputs}\label{app:typical}

In \cref{prp:depolarizing} we found that the residual channel $\mathcal{N}_{A \to C}$ for maximally mixed input on $B$ is completely depolarizing.
In other words, its Choi state is maximally mixed, $\mathcal{N}_{A' \to C}[\Phi^+_{AA'}] = \tau_A \ot \tau_C$.
Under certain conditions this is approximately true also for \emph{typical} input states on $B$:

\begin{prp}
  \label{prp:randomB}
  Let $U_{AB\to CD}$ be a maximally $I_3$-scrambling unitary and $\sigma_B$ a Haar-random pure state.
  Let $\widetilde{\mathcal N}_{A\to C}[\sigma_A] = \tr_D[U (\sigma_A \ot \sigma_B) U^\dagger]$ denote the corresponding residual channel from $A$ to $C$, and $\widetilde\rho_{AC} := \widetilde{\mathcal N}_{A'\to C}[\Phi^+_{AA'}]$ its Choi state.
  Then,
  \begin{equation*}
    \Pr(\norm{\widetilde\rho_{AC} - \tau_{AC}}_1 \leq \varepsilon) \ge 1- \frac{\abs{A}\abs{C}}{\varepsilon^2 \abs{D}}
   \end{equation*}
\end{prp}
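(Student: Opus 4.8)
The plan is to prove this concentration statement by the second-moment (variance) method. Since we want to bound the probability that $\widetilde\rho_{AC}$ deviates from $\tau_{AC}$, I would first convert the trace-norm deviation into a Hilbert--Schmidt deviation, which is far easier to average. As $\widetilde\rho_{AC}-\tau_{AC}$ is an operator on the $\abs A\abs C$-dimensional space $AC$, Cauchy--Schwarz gives $\norm{\widetilde\rho_{AC}-\tau_{AC}}_1 \le \sqrt{\abs A\abs C}\,\norm{\widetilde\rho_{AC}-\tau_{AC}}_2$. Applying Markov's inequality to the squared $2$-norm then reduces the entire proposition to the single estimate $\EE_{\sigma_B}\norm{\widetilde\rho_{AC}-\tau_{AC}}_2^2 \le 1/\abs D$, since $\Pr(\norm{\widetilde\rho_{AC}-\tau_{AC}}_1>\varepsilon)\le \abs A\abs C\,\EE\norm{\widetilde\rho_{AC}-\tau_{AC}}_2^2/\varepsilon^2$.

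Next I would verify that the deviation is unbiased. Because the Haar average of a random pure state is maximally mixed, $\EE_{\sigma_B}\sigma_B=\tau_B$, linearity of the channel gives that $\EE_{\sigma_B}\widetilde\rho_{AC}$ is precisely the Choi state of the residual channel with maximally mixed input on $B$, which equals $\tau_{AC}$ by \cref{prp:depolarizing}. Consequently $\EE\norm{\widetilde\rho_{AC}-\tau_{AC}}_2^2 = \EE\tr\widetilde\rho_{AC}^2 - \tr\tau_{AC}^2 = \EE\tr\widetilde\rho_{AC}^2 - 1/(\abs A\abs C)$, so everything comes down to bounding the averaged purity $\EE\tr\widetilde\rho_{AC}^2$.

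The heart of the argument is this second-moment evaluation. Expanding $\tr\widetilde\rho_{AC}^2$ with the swap trick and using the standard Haar identity $\EE_{\sigma_B}[\sigma_B\ot\sigma_B]=(I_{BB}+F_B)/(\abs B(\abs B+1))$, where $F_B$ is the swap on two copies of $B$, the averaged purity splits into two traces. The swap term collapses by unitarity alone: since $(U\ot U)F_{A'B'}(U\ot U)^\dagger = F_{CD}$, it reduces to a constant of order $1/\abs D$. The identity term is exactly where maximal scrambling enters. After tracing out $D$ it takes the form $\sum_{a,\tilde a}\norm{\tr_D[U(\ketbra a{\tilde a}\ot I_B)U^\dagger]}_2^2$, and because $\tr_D[U(\ketbra a{\tilde a}\ot I_B)U^\dagger]=\abs B\,\mathcal N_{A\to C}[\ketbra a{\tilde a}]$ with $\mathcal N_{A\to C}$ completely depolarizing by \cref{prp:depolarizing}, each summand collapses to $\abs B\,\delta_{a\tilde a}\tau_C$. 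This term contributes $\approx 1/(\abs A\abs C)$. Adding the two contributions and subtracting $1/(\abs A\abs C)$, the two leading $1/(\abs A\abs C)$ pieces cancel up to a negative correction, leaving a variance bounded by $1/\abs D$, which is exactly the estimate required in the first step.

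The main obstacle I expect is precisely this second-moment trace: one must carefully bookkeep the two swap operators --- one on the random input $B$ coming from the Haar average, one on the output $AC$ coming from the purity --- as they are conjugated by $U\ot U$, and keep track of the dimensional factors under partial traces and the unitarity constraint $\abs A\abs B=\abs C\abs D$. The crucial point is that the identity term is where \cref{prp:depolarizing} must be invoked to collapse an otherwise $U$-dependent sum to a clean constant; a naive partial-trace bound on that sum is far too lossy (it scales like $\abs A/\abs C$ rather than $1/(\abs A\abs C)$), so the complete depolarization of the residual channel is genuinely essential rather than a convenience.
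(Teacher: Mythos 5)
Your proposal is correct and follows essentially the same route as the paper's proof: Haar second moment of $\sigma_B^{\ot 2}$, the swap trick for the purity, maximal scrambling ($\rho_{AC}=\tau_{AC}$) to collapse the identity term to $\approx 1/(\abs A\abs C)$, unitarity alone to bound the swap term by $1/\abs D$, and then Cauchy--Schwarz plus Markov. The only differences are cosmetic --- you conjugate $F_{A'B'}$ through $U^{\ot 2}$ and act with the depolarizing map on matrix units, whereas the paper teleports the swap to recognize $\tr\rho_{ABC}^2=\tr\rho_D^2$ --- and your attribution of which hypothesis carries which term matches the paper exactly.
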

\begin{proof}
  Let us write $\rho_{ABCD}$ for the Choi state of $U_{AB\to CD}$.
  For a Haar-random pure state, $\EE[\sigma_B] = \tau_B$.
  Hence, the average Choi state is maximally mixed, $\EE[\widetilde\rho_{AC}] = \rho_{AC} = \tau_{AC}$.

  We now bound the mean square deviation.
  For this, let $\norm{X}_2 := \sqrt{\tr X^\dagger X}$ denote the 2-norm.
  Then:
  \[ \EE[\lVert \widetilde\rho_{AC} - \tau_{AC} \rVert_2^2] = \EE[\tr \widetilde\rho_{AC}^2] - \tr \tau_{AC}^2. \]
  We calculate the first term using the swap trick:
  \begin{align*}
    \tr \widetilde\rho_{AC}^2 = \tr (\widetilde\rho_{AC} \ot \widetilde\rho_{AC}) F_{AC} = \tr U^{\ot 2}_{A'B\to CD} (\Phi^{+\ot2}_{AA'} \ot \sigma_B^{\ot2}) U^{\dagger\ot2}_{A'B\to CD} F_{AC},
  \end{align*}
  where $F_{AC}$ denotes the swap operator that exchanges the two copies of $AC$.
  The second moment of a Haar-random state is given by $\EE[\sigma_B^{\ot2}] = \frac 1 {\abs B (\abs B+1)} (I + F_B)$ where $I$ is the identity and $F_B$ the swap operator on the two copies of $B$ .
  Thus:
  \begin{align*}
    \EE[\tr \widetilde\rho_{AC}^2]
  &= \frac 1 {\abs B (\abs B+1)} \tr U^{\ot 2}_{A'B\to CD} \Phi^{+\ot2}_{AA'} U^{\dagger\ot2}_{A'B\to CD} F_{AC} \\
  &+ \frac 1 {\abs B (\abs B+1)} \tr U^{\ot 2}_{A'B\to CD} \Phi^{+\ot2}_{AA'} F_B U^{\dagger\ot2}_{A'B\to CD} F_{AC}.
  \end{align*}
  The first term can be bounded as
  \begin{align*}
	&\quad	\frac 1 {\abs B (\abs B+1)} \tr U^{\ot 2}_{A'B\to CD} \Phi^{+\ot2}_{AA'} U^{\dagger\ot2}_{A'B\to CD} F_{AC} \\
	&= \frac {\abs B^2} {\abs B (\abs B+1)} \tr U^{\ot 2}_{A'B\to CD} (\Phi^{+\ot2}_{AA'} \ot \tau_B^{\ot 2}) U^{\dagger\ot2}_{A'B\to CD} F_{AC} \\
	&= \frac {\abs B^2} {\abs B (\abs B+1)} \tr \tau_{AC}^2
	\leq \tr \tau_{AC}^2,
  \end{align*}
  where the last equality follows since the Choi state of $U$ is maximally mixed on $AC$.
  For the second term, we compute
  \begin{align*}
  &\quad \frac 1 {\abs B (\abs B+1)} \tr U^{\ot 2}_{A'B\to CD} \Phi^{+\ot2}_{AA'} F_B U^{\dagger\ot2}_{A'B\to CD} F_{AC} \\
  &= \frac {\abs B^2} {\abs B (\abs B+1)} \tr U^{\ot 2}_{A'B'\to CD} (\Phi^{+\ot2}_{AA'} \ot \tau^{\ot2}_{B'}) F_{B'} U^{\dagger\ot2}_{A'B'\to CD} F_{AC} \\
  &=\frac {\abs B^2} {\abs B (\abs B+1)} \tr U^{\ot 2}_{A'B'\to CD} (\Phi^{+\ot2}_{AA'} \ot \Phi^{+\ot 2}_{BB'}) U^{\dagger\ot2}_{A'B'\to CD} F_{ABC} \\
  &=\frac {\abs B^2} {\abs B (\abs B+1)} \tr \rho_{ABC}^2
  \leq \tr \rho_D^2 = \frac 1 {\abs D}.
  \end{align*}
  In the first step, we have relabeled $B$ to $B$' and inserted two copies of the maximally mixed state $\tau_{B'}$; 
  in the second, we have extended the maximally mixed states to maximally entangled states $\Phi^+_{BB'}$ and teleported the swap operator from the $B'$ systems to the $B$ systems;
  in the third step, we have recognized the Choi state of $U$ and undone the swap trick;
  and in the last we have used that $\rho_D$ is maximally mixed.
  Together, we obtain the following bound on the mean square deviation:
  \[ \EE[\lVert \widetilde\rho_{AC} - \tau_{AC} \rVert_2^2] \leq \frac 1 {\abs D}. \]
  By the Cauchy-Schwarz inequality, $\norm{X}^2_1 \leq \abs A \abs C \, \norm{X}^2_2$, we get
  \[ \EE[\lVert \widetilde\rho_{AC} - \tau_{AC} \rVert_1^2] \leq \frac {\abs A \abs C} {\abs D}. \]
  Now Markov's inequality gives
  \[ \Pr(\lVert \widetilde\rho_{AC} - \tau_{AC} \rVert_1 \geq \varepsilon)
  = \Pr(\lVert \widetilde\rho_{AC} - \tau_{AC} \rVert_1^2 \geq \varepsilon^2)
  \leq \frac {\EE[\lVert \widetilde\rho_{AC} - \tau_{AC} \rVert_1^2]} {\varepsilon^2}
  \leq \frac {\abs A\abs C}{\varepsilon^2 \abs D},
  \]
  and we obtain the desired bound:
  \[\Pr(\lVert \widetilde\rho_{AC} - \tau_{AC} \rVert_1 \leq \varepsilon) \ge 1- \frac {\abs A\abs C}{\varepsilon^2 \abs D}.\qedhere \]
\end{proof}

The fact that we need $\abs A \abs C \ll \abs D$ is intuitive:
For any realization of the random pure state $\sigma_B$, the state $\widetilde\rho_{ACD} = U_{A'B\to CD} \ket{\Phi^+_{AA'}} \ot \ket{\sigma_B}$ is a purification of $\rho_{AC}$.
Thus, if $\rho_{AC}$ is to be maximally mixed then we clearly need that $\abs A \abs C \leq \abs D$, since otherwise the Schmidt rank cannot be $\abs A \abs C$.

\smallskip

One natural scenario to apply \cref{prp:randomB} is to the toy model of black hole evaporation discussed on p.~\pageref{black holes} (with $D$ and $C$ interchanged). If $A$ is small (e.g., a qubit) and the initial black hole $B$ is in a typical pure state, the Hawking radiation emitted at later times $D$ is decoupled from $A$ if $D$ is much smaller than the post-evaporation black hole $C$~\cite{hayden2007black}. The only assumption necessary about the dynamics is that the black hole be maximally $I_3$-scrambling.

Another natural scenario to apply~\cref{prp:randomB} is in the context of maximally $I_3$-scrambling MIMO unitaries as discussed in \cref{sec:multipartite}.
Here, $\abs{A_i}\abs{C_j}$ is usually much smaller than $\abs{C_j^c}$.
Hence, if we input a random pure state into $A_i^c$ and half of a maximally entangled state into $A_i$, then with high probability the reduced state on $A_i C_j$ is close to being maximally mixed.
We can make a even stronger statement by demanding
\begin{equation*}
  \norm{\widetilde\rho_{A_iC_j} - \tau_{A_iC_j}}_2 \le \frac{1}{\abs{A_i}\abs{C_j}},
\end{equation*}
which by~\cite{gurvits2002separable} would imply that $\widetilde\rho_{A_iC_j}$ is separable.
By Choi-Jamio\l{}kowski, this means $\mathcal{N}_{A_i' \to C_j}^{\sigma_{A_i^c}}$ is entanglement-breaking.
Using \cref{prp:randomB} the probability of this is at least $1 - {\abs{A_i}^2 \abs{C_j}^2}/{\abs{C_j^c}}$. In the case where all systems are of size $d$,
\begin{equation*}
  \frac{\abs{A_i}^2 \abs{C_j}^2}{\abs{C_j^c}} = \frac{1}{d^{n-5}},
\end{equation*}
which vanishes for large $n$ or $d$.

\section{Existence of maximally scrambling MIMO unitaries}\label{app:mimo}
In this appendix we prove \cref{prp:mimoex}.
As discussed in \cref{sec:multipartite}, we first consider the case where $M_n$ is replaced by an arbitrary $n \times n$ matrix $M$ and identify sufficient conditions for the corresponding unitary $U_M \ket{\vec x}=\ket{M\vec x}$ to be maximally $I_3$-scrambling.
First, it is clear that $U_M$ is unitary if and only if $M$ is invertible modulo $d$.
We then consider the Choi state of $U_M$,
\[ \rho_{AC} = \frac 1 {d^n} \sum_{\vec x, \vec y} \ket{\vec x}\!\!\bra{\vec y}_A \otimes \ket{M\vec x}\!\!\bra{M\vec y}_C, \]
where we write $A=A_1\dots{}A_n$ and $C=C_1\dots{}C_n$.
We now compute the reduced state $\rho_{A_i^cC_j}$.
The partial trace over $A_i$ forces $x_i=y_i$, and the partial trace over $C_j^c$ forces $M\vec x=M\vec y$, except for the $j$-th entry.
Assuming that matrix we obtain by replacing the $j$-th row of $M$ with the elementary row $e_i$ is invertible modulo $d$, this implies that $\vec x = \vec y$.
Hence,
\[ \rho_{A_i^cC_j} = \frac 1 {d^n} \sum_{\vec x} \ket{\vec x'}\!\!\bra{\vec x'}_{A_i^c} \ot \ket{(M\vec x)_j}\!\!\bra{(M\vec x)_j}_{C_j}, \]
where $\vec x'$ is $\vec x$ with the $i$-th entry omitted and $(M\vec x)_j$ denotes the $j$-th entry of $M\vec x$.
First summing over $\vec x'$ and then over all options for $x_i$, we get
\[ \rho_{A_i^cC_j} = \frac 1 {d^{n-1}} \sum_{\vec x'} \ket{\vec x'}\!\!\bra{\vec x'}_{A_i^c} \ot \frac1d \sum_{x_i} \ket{(M\vec x)_j}\!\!\bra{(M\vec x)_j}_{C_j}. \]
Assuming the matrix element $M_{ji}$ is invertible modulo $d$, the right-hand side sum is over all basis states, for any fixed choice of $\vec x'$.
Thus:
\[ \rho_{A_i^cC_j} = \frac 1 {d^{n-1}} \sum_{\vec x'} \ket{\vec x'}\!\!\bra{\vec x'}_{A_i^c} \ot \tau_{C_j} = \tau_{A_i^cC_j}. \]
If we replace $i$ by some $k\neq i$ then we also find that
\[ \rho_{A_iC_j} = \tr_{A_k^c \setminus A_i}[\rho_{A_k^cC_j}] = \tau_{A_iC_j}. \]
Together, we obtain that
\[ I_3(A_i ; A_i^c ; C_j) = -I(A_i ; A_i^c | C_j)
= -2\log d,
\]
as desired.
Hence, it is sufficient for $M$ to satisfy the following three criteria so that $U_M$ is maximally $I_3$-scrambling:
\begin{enumerate}
\item $M$ is an invertible matrix modulo $d$.
\item If we replace any row of $M$ by any elementary row then the resulting matrix is still invertible modulo $d$.
\item All entries of $M$ are invertible modulo $d$.
\end{enumerate}
Now we show that $M_n$ defined in~\eqref{eq:mimoex} satisfies these conditions when $d > n+1$ and is prime.
Recall that
\begin{equation*}
  M_n = \begin{bmatrix}
    2 & 1 & \dots & 1 \\
    1 & 2 & \dots & 1 \\
      &   & \ddots \\
    1 & 1 & \dots & 2
  \end{bmatrix}.
\end{equation*}
The third condition is obvious, since both $1$ and $2$ are invertible modulo $d$.
For the first condition, we note that $n+1$ is invertible modulo $d$.
Hence the following matrix is well-defined and easily checked to be the inverse of $M_n$:
\begin{equation*}
  M_n^{-1} = -(n+1)^{-1} \begin{bmatrix}
    -n & 1 & \dots & 1 \\
    1 & -n & \dots & 1 \\
      &   & \ddots \\
    1 & 1 & \dots & -n
  \end{bmatrix}
\end{equation*}
It remains to verify the second criterion.
If we replace the $j$-th row by an elementary row $e_i$, we obtain a matrix of the form
\begin{equation*}
  N_n = \begin{bmatrix}
    2 & 1 & \dots & 1 & 1\\
    1 & 2 & \dots & 1 & 1\\
      &   & \dots \\
    0 & \dots & 1 & \dots &0 \\
      &   & \dots \\
    1 & 1 & \dots & 1 & 2
  \end{bmatrix}.
\end{equation*}
We can calculate the determinant by cofactor expanding along the elementary row.
If $i=j$ then we obtain that $\det N_n = \pm \det M_{n-1}$, which is nonzero by the preceding.
Otherwise, if $i\neq j$ then find that $\det N_n$ is up to sign equal to the determinant of the following $(n-1)\times(n-1)$ matrix,
\begin{equation*}
  N'_{n-1} =
  \begin{bmatrix}
    2 & 1 & \dots & 1 & 1\\
    1 & 2 & \dots & 1 & 1\\
      &   & \dots \\
    1 & \dots & 1 & \dots &1 \\
      &   & \dots \\
    1 & 1 & \dots & 1 & 2
  \end{bmatrix},
\end{equation*}
which looks like $M_{n-1}$ except that a $2$ is replaced by a $1$.
We can use determinant-preserving row operations to reduce this matrix to
\begin{equation*}
  \begin{bmatrix}
    1 & 0 & \dots & 0 & 0\\
    0 & 1 & \dots & 0 & 0\\
      &   & \dots \\
    1 & \dots & 1 & \dots &1 \\
      &   & \dots \\
    0 & 0 & \dots & 0 & 1
  \end{bmatrix},
\end{equation*}
which has determinant one.
This concludes the proof of \cref{prp:mimoex}.


\bibliographystyle{JHEP}
\bibliography{paper}

\end{document}